\theoremstyle{plain} \newtheorem{theorem}{Theorem}
\theoremstyle{plain} \newtheorem{proposition}{Proposition}
\theoremstyle{plain} 
\theoremstyle{remark} \newtheorem{remark}{Remark}
\theoremstyle{remark} \newtheorem{lemma}{Lemma}
\theoremstyle{plain} 
\theoremstyle{plain} 
\begin{document}
\title{Profit-Maximizing Planning and Control of Battery Energy Storage Systems for Primary Frequency Control}
\author{Ying~Jun~(Angela)~Zhang,~\IEEEmembership{Senior Member,~IEEE}, Changhong~Zhao,~\IEEEmembership{Member,~IEEE}, Wanrong~Tang,~\IEEEmembership{Student Member,~IEEE}, Steven~H.~Low,~\IEEEmembership{Fellow,~IEEE},
\thanks{This work was supported in part by the National Basic Research Program (973 program Program number 2013CB336701), and three grants from the Research Grants Council of Hong Kong under General Research Funding (Project number 2150828 and 2150876) and Theme-Based Research Scheme (Project number T23-407/13-N).}
\thanks{Y. J. Zhang and W. Tang are with the Department of Information Engineering, The Chinese University of Hong Kong, Hong Kong. They are also with Shenzhen Research Institute, The Chinese University of Hong Kong}
\thanks{C. Zhao and S. Low are with the Engineering and Applied Science Division, California Institute of Technology, Pasadena, CA, 91125, USA.}}

\maketitle

\begin{abstract}
We consider a two-level profit-maximizing strategy, including planning and control, for battery energy storage system (BESS) owners that participate in the primary frequency control (PFC) market. Specifically, the optimal BESS control minimizes the operating cost by keeping the state of charge (SoC) in an optimal range. Through rigorous analysis, we prove that the optimal BESS control is a ``state-invariant" strategy in the sense that the optimal SoC range does not vary with the state of the system. As such, the optimal control strategy can be computed offline once and for all with very low complexity. Regarding the BESS planning, we prove that the the minimum operating cost is a decreasing convex function of the BESS energy capacity. This leads to the optimal BESS sizing that strikes a balance between the capital investment and operating cost. Our work here provides a useful theoretical framework for understanding the planning and control strategies that maximize the economic benefits of BESSs in ancillary service markets.
\end{abstract}

\nomenclature{$ c_e $}{ electricity purchasing and selling price }
\nomenclature{$ c_p $}{ penalty rate of PFC energy shortage }
\nomenclature{$ n $}{ interval index }
\nomenclature{$ I_n $}{ $n^{th}$ $I$ interval or length of the $n^{th}$ $I$ interval }
\nomenclature{$ J_n $}{ $n^{th}$ $J$ interval or length of the $n^{th}$ $J$ interval }
\nomenclature{$ t_n^s $}{ start time of the $n^{th}$ $I$ interval }
\nomenclature{$ t_n^e $}{ end time of the $n^{th}$ $I$ interval }
\nomenclature{$ q_n $}{ indicator variables of $n^{th}$ frequency excursion event }
\nomenclature{$ {P_{PFC}}_n $}{ regulation power for $n^{th}$ $J$ interval }
\nomenclature{$ p(t) $}{ battery charging and discharging power at time $t$}
\nomenclature{$ p_{ac} (t) $}{ power exchanged with the AC bus at time $t$}
\nomenclature{$ \eta $}{ charging and discharging efficiency }
\nomenclature{$ E_{max} $}{ battery capacity  }
\nomenclature{$ P_{max} $}{ maximum charging power of battery  }
\nomenclature{$ s_n $}{ SoC of the BESS at the beginning of the $n^{th}$ $I$ interval}
\nomenclature{$ s_n^e $}{ SoC of the BESS at the end of the $n^{th}$ $I$ interval}
\nomenclature{$ cost_{e,n} $}{ battery charging cost incurred in the $n^{th}$ $I$ interval  }
\nomenclature{$ cost_{p,n}(s_n^e) $}{ penalty assessed in the $n^{th}$ $J$ interval }
\nomenclature{$ H^*_n(s_n) $}{ optimal value at the $n^{th}$ stage of the multi-stage problem }
\nomenclature{$ \alpha $}{ discounting factor  }
\printnomenclature %the command for generating nomenclature

\section*{Nomenclature}

\begin{tabular}{l l }
  $ c_e $ & electricity purchasing and selling price \\
  $ c_p $ & penalty rate for PFC regulation failure \\
  $ n $ & interval index \\
  $ I_n $ & length of the $n^{th}$ $I$ interval\\
  $ J_n $ & length of the $n^{th}$ $J$ interval\\
  $ t_n^s $ & start time of the $n^{th}$ $I$ interval \\
  $ t_n^e $ & end time of the $n^{th}$ $I$ interval \\
  $ q_n $ & indicator variable of $n^{th}$ excursion event \\
  $ {P_{PFC}}_n $ & PFC power requested in the $n^{th}$ $J$ interval \\
  $ p(t) $ & battery charging/discharging power at time $t$ \\
  $ p_{ac} (t) $ & power exchanged with the AC bus at time $t$ \\
  $ \eta $ & battery charging and discharging efficiency \\
  $ E_{max} $ & battery capacity \\
  $ P_{max} $ & maximum charging power of battery \\
  $ s_n $ & SoC at the beginning of the $n^{th}$ $I$ interval \\
  $ s_n^e $ & SoC at the end of the $n^{th}$ $I$ interval \\
  $ cost_{e,n} $ & charging cost incurred in the $n^{th}$ $I$ interval \\
  $ cost_{p,n} $ & penalty assessed in the $n^{th}$ $J$ interval \\
%  $ H^*_n(s_n) $ & optimal value at the $n^{th}$ stage of the multi-stage problem \\
%  $ \alpha $ & discounting factor \\
\end{tabular}

\section{Introduction}

%What is PFC. The need of PFC%

The instantaneous supply of electricity in a power system must match the time-varying demand as closely as possible. Or else, the system frequency would rise or decline, compromising the power quality and security. To ensure a
stable frequency at its nominal value, the Transmission System Operator (TSO) must keep control reserves compensate for unforeseen mismatches between generation and load. Frequency control is performed in three levels, namely primary, secondary, and tertiary controls \cite{ kundur1994power}. The first level, primary frequency control (PFC), reacts within the first few seconds when system frequency falls outside a dead band, and restores quickly the balance between the active power generation and consumption.
Due to its stringent requirement on the response time, PFC is the most expensive control reserve. This is because PFC is traditionally performed by thermal generators, which are designed to deliver bulk energy, but not for the provision of fast-acting reserves. To complement the generation-side PFC, load-side PFC has been considered as a fast-responding and cost-effective alternative \cite{schweppe1980homeostatic, zhao2014design,zhao2013optimal,short2007stabilization, molina2011decentralized}. Nonetheless, the provision of load-side PFC is constrained by end-use disutility caused by load curtailment.

Battery energy storage systems (BESSs) have recently been advocated as excellent candidates for PFC due to their extremely fast ramp rate\cite{leecoordinated,xu2014bess}. Indeed, the supply of PFC reserve has been identified as the highest-value application of BESSs \cite{oudalov2006value}. According to a 2010 NREL report \cite{denholm2010role}, the annual profit of energy storage devices that provide PFC reserve is as high as US\$236-US\$439 per KW in the U.S. electricity market.
The use of BESS as a frequency control reserve in island power systems dates back to about 20 years ago \cite{kottick1993battery}. Due to the fast penetration of renewable energy sources, the topic recently regained research interests in both interconnected power systems \cite{ borsche2013power, xu2014bess} and microgrids \cite{mercier2009optimizing, aghamohammadi2014new}. %The feasibility of BESS-aided PFC has been confirmed by the large-scale deployment of BESSs worldwide. For example, the California Public Utilities Commission has mandated that 50MW of storage be installed in the Los Angeles area by 2020, as well as a top-line mandate of 1.3GW of storage for the entire state. A similar trend has been observed in both Europe and Asia. By 2012, Japan-based NGK alone had installed over 450MW of sodium-sulfur batteries.

In view of the emerging load-side PFC markets instituted worldwide \cite{spees2007demand,FERC}, we are interested in deriving profit-maximizing planning and control strategies for BESSs that participate in the PFC market. In particular, the optimal BESS control aims to minimize the operating cost by scheduling the charging and discharging of the BESS to keep its state of charge (SoC) in a proper range. Here, the operating cost includes both the battery charging/discharging cost and the penalty cost when the BESS fails to provide the PFC service according to the contract with the TSO. We also determine the optimal BESS energy capacity that balances the capital cost and the operating cost.
Previously, \cite{ mercier2009optimizing, xu2014bess} investigated the problem of BESS dimensioning and control, with the aim of maximizing the profit of BESS owners. There, the BESS is charged or discharged even when system frequency is within the dead band to adjust the state of charge (SoC). This is to make sure that the BESS has enough capacity to absorb or supply power when the system frequency falls outside the dead band. A different approach to correct the SoC was proposed in \cite{borsche2013power}, where the set point is adjusted to force the frequency control signal to be zero-mean.

To complement most of the previous work based on simulations or experiments, we develop a theoretical framework for analyzing the optimal BESS planning and control strategy in PFC markets. In particular, the optimal BESS control problem is formulated as a stochastic dynamic program with continuous state space and action space. Moreover, the optimal BESS planning problem is derived by analyzing the optimal value of the dynamic programming, which is a function of the BESS energy capacity. A key challenge here is that the complexity of solving a dynamic programming problem with continuous state and action spaces is generally very high. Moreover, standard numerical methods to solve the problem do not reveal the underlying relationship between the operating cost and the energy capacity of
the BESS. Our main contributions in addressing this challenge are summarized as follows.

\begin{itemize}
\item We prove that with slow-varying electricity price, the optimal BESS control problem reduces to finding an optimal target SoC every time the system frequency falls inside the dead band. In other words, the optimal decision can be described by a scalar, and hence the dimension of the action space is greatly reduced.
\item We show that the optimal target SoC is a range that is \emph{invariant} with respect to the system state at each stage of the dynamic programming. Moreover, the range reduces to a fixed point either when the battery charging/discharging efficiency approaches 1 or when the electricity price is much lower than the penalty rate for regulation failure. This result is extremely appealing, for the optimal target SoC can be calculated offline once and for all with very low complexity.
\item We prove that the minimum operating cost is a decreasing convex function of the BESS energy capacity. Based on the result, we discuss the optimal BESS planning strategy that strikes a balance between the capital cost and the operating cost.
\end{itemize}

The rest of the paper is organized as follows. In Section \ref{sec:system}, we describe the system model. The BESS operation problem is formulated as a stochastic dynamic programming problem in Section \ref{sec:formulation}. In Section \ref{sec:optimalSoC}, we derive the optimal BESS operation strategy, which is a range of target SoC \emph{independent} of the system state. The optimal BESS planning is discussed in Section \ref{sec:optimalPlan}. Numerical results are presented in Section \ref{sec:numerical}. Finally, the paper is concluded in Section \ref{sec:conclusions}.

\section{System Model}\label{sec:system}

We consider a profit-seeking BESS selling PFC service in the ancillary service market. The BESS receives remuneration from the TSO for providing PFC regulation, and is liable to a penalty whenever the BESS fails to deliver the service as specified in the contract with the TSO. We endeavour to find the optimal planning and control of the BESS to maximize its profit in the PFC market.

\subsection{System Timeline}
Most of the time, the system frequency stays inside a dead band (typically 0.04\%) centred around the nominal frequency. Once the system frequency falls outside the dead band, the TSO sends regulation signals to regulating units, including the BESS. The BESS needs to supply power (i.e., be discharged) in a frequency under-excursion event and absorb power (i.e., be charged) in a frequency over-excursion event.

The system time can be divided into two types of intervals as illustrated in Fig. \ref{fig:timeline}. The $I$ intervals are the ones during which PFC is not needed, i.e., when the system frequency stays inside the dead band or when the frequency is regulated by secondary or tertiary reserves. An $I$ interval ends and a $J$ interval starts, when a frequency excursion event occurs. The lengths of the $J$ intervals are the PFC deployment times requested by the TSO.

The lengths of the $n^{th}$ $I$ and $J$ intervals are denoted as $I_n$ and $J_n$, respectively. Suppose that $I_n$'s  are independently and identically distributed (i.i.d.) with probability density function (PDF) $f_I(x)$ and complimentary cumulative distribution function (CCDF) $\tilde{F}_I(x)$. Likewise, $J_n$'s are i.i.d. with PDF $f_J(x)$ and CCDF $\tilde{F}_J(x)$. Note that $f_I(x)=-\frac{d\tilde{F}_I(x)}{dx}$ and $f_J(x)=-\frac{d\tilde{F}_J(x)}{dx}$. Moreover, define indicator variables $q_n$ such that $q_n=1$ and $-1$ when the $n^{th}$ frequency excursion event is an over-excursion event and under-excursion event, respectively. Let $p_1=\Pr\{ q_n=1\}$ and  $p_{-1}=1-p_1=\Pr\{ q_n=-1\}$.

 \begin{figure}
\centering
\includegraphics[width=0.45\textwidth]{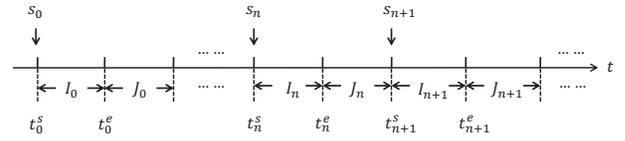}
\caption{System time line.}\label{fig:timeline}
\end{figure}

\subsection{BESS Operation}

Suppose that the  BESS has an energy capacity $E_{max}$ (kWh) and maximum charging and discharging power limits $P_{max}$ (kW). The charging and discharging efficiency is $0<\eta\leq 1$. Moreover, let $e(t)$ denote the amount of energy stored in the battery at time $t$, and $p(t)$ denote the battery charging ($p(t)>0$) or discharging ($p(t)<0$) power at time $t$. Due to the charging and discharging efficiency $\eta$, the power exchanged with the AC bus, denoted by $p_{ac}(t)$, is
\begin{equation}\label{eqn:p_ac}
p_{ac}(t)=
\begin{cases}
p(t)/\eta & \text{if}~p(t)>0 \\
p(t)\eta & \text{if}~p(t)<0
\end{cases}.
\end{equation}

In the $n^{th}$ frequency excursion event, the BESS is obliged to supply or absorb $P_{PFC,n}$ kW regulation power for the entire period of $J_n$. Here, $P_{PFC,n}$'s are i.i.d. random variables with pdf $f_{P_{PFC}}(x)$ and CCDF $\tilde{F}_{P_{PFC}}(x)$. Typically, $P_{PFC,n}$ takes value in $[0, R]$, where $R$ is the standby reserve capacity specified in the contract with the TSO. In return, the BESS is paid for the availability of the standby reserve. That is, the remuneration is proportional to $R$ and the tendering period, but independent of the actual amount of PFC energy supplied or consumed.

Let $s_n$ and $s_n^e$ denote the SoC (normalized the energy capacity $E_{max}$)  \footnote{SoC at time $t$ is defined as $s(t)=\frac{e(t)}{E_{max}}$. Obviously, $s(t)\in[0, 1]$.} of the BESS at the beginning and end of $I_n$, respectively. Obviously, when $s_n^e$ is too low or too high, the BESS may fail to supply or absorb the amount PFC energy requested by the TSO in the subsequent $J_n$ interval, resulting in a regulation failure. In this case, the BESS is assessed a penalty that is proportional to the shortage of PFC energy. Let $c_p$ be the penalty rate per kWh PFC energy shortage. Then, the penalty assessed in the $n^{th}$ frequency excursion event is
\begin{equation}\label{eqn:2}
\resizebox{\hsize}{!}{$cost_{p,n}(s_n^e)=
\begin{cases}
c_p \left(E_{PFC,n} - \frac{E_{max}(1-s_n^e)}{\eta} \right)^+ &  \text{if}~q_n=1 \\
c_p \left(E_{PFC,n}- \eta E_{max}s_n^e \right)^+ &  \text{if}~ q_n= -1
\end{cases},$}
\end{equation}
where $(x)^+=\max(x, 0)$ and $E_{PFC,n}=P_{PFC,n}J_n$ is an auxiliary variable indicating the PFC energy supplied or absorbed during $J_n$. Since $P_{PFC,n}$'s and $J_n$'s are i.i.d., respectively, $E_{PFC,n}$ are also i.i.d. variables with PDF $f_{E_{PFC}}(x)$ and CCDF  $\tilde{F}_{E_{PFC}}(x)$. Due to the battery charging/discharging efficiency, $\eta E_{PFC,n}$ and $\frac{E_{PFC,n}}{\eta}$ are the energy charged to or discharged from the BESS during the PFC deployment time.

To avoid penalty, the BESS must be charged or discharged during $I$ intervals to maintain a proper level of SoC. Suppose that the electricity purchasing and selling price, denoted by $c_e$, varies at a much slower time scale (i.e., hours) than that at which the PFC operates (i.e., seconds to minutes), and thus can be regarded as a constant during the period of interest. Then, the battery charging cost incurred in $I_n$ is calculated as
\begin{equation}\label{eqn:charging_cost}
cost_{e,n}= c_e \int_{t_n^s}^{t_n^s+I_n} p_{ac}(t) dt,
\end{equation}
where $p_{ac}$ is given in \eqref{eqn:p_ac}. $cost_{e,n}>0$ corresponds to a cost due to power purchasing, and $cost_{e,n}<0$ corresponds to a revenue due to power selling. Notice that the BESS SoC is bounded between 0 and 1. Thus, $p(t)$ is subject to the following constraint
\begin{equation}\label{eqn:BESScons}
 0\leq s_n E_{max}+\int_{t^s_n}^\tau p(t) dt \leq E_{max} ~\forall \tau\in[t^s_n, t^e_n],
\end{equation}
where $t^s_n$ and $t^e_n$ are the starting and end times of $I_n$, respectively. As a result, $s_n$ and $s_n^e$ are related as
\begin{equation}\label{eqn:6}
s_n^e=\frac{s_nE_{max}+\int_{t_n^s}^{t_n^s+I_n} p(t) dt}{E_{max}},
\end{equation}
subject to the constraint in \eqref{eqn:BESScons}. Likewise, the SoC at the beginning of the next $I$ interval, $s_{n+1}$, is related to $s_n^e$ as
\begin{equation}\label{eqn:7}
s_{n+1}=\left[ \frac{s_n^e E_{max}+ (\mathbf{1}_{q=1}\eta - \mathbf{1}_{q=-1}\frac{1}{\eta} ) E_{PFC}}{E_{max}} \right]_0^1,
\end{equation}
where $[x]_0^1=\min(1, \max(0, x))$ and $\mathbf{1}_A$ is an indicator function that equals 1 when $A$ is true and 0 otherwise.

\section{Problem Formulation}\label{sec:formulation}

As mentioned in the previous section, the remuneration the BESS receives from the TSO is proportional to the standby reserve capacity $R$ and the tendering period, but independent of the actual amount of PFC energy supplied or absorbed. With fixed remuneration, the problem of profit maximization is equivalent to the one that minimizes the capital and operating costs. In this section, we formulate the optimal BESS control problem that minimizes the operating cost $\sum_n (cost_{e,n}+cost_{p,n})$ for a given BESS capacity $E_{max}$. The optimal BESS planning problem that finds the optimal $E_{max}$ will be discussed later in Section \ref{sec:optimalPlan}.

At the beginning of each interval $I_n$, the optimal $p(t)$ during this $I$ interval is determined based on the observation of $s_n$.  When making the decision, the BESS has no prior knowledge of the realizations of $I_k$, $E_{PFC,k}$, and $q_k$ for $k=n, n+1, \cdots$.  As such, the problem is formulated as the following stochastic dynamic programming, where $s_n$ is regarded as the system state at the $n^{th}$ stage, and the state transition from $s_n$ to $s_{n+1}$ is determined by the decision $p(t)$ as well the exogenous variables $I_n$, $E_{PFC, n}$, and $q_n$.

At stage $n$, solve

\begin{equation}\label{eqn:multistage}
\resizebox{\hsize}{!}{$
\begin{aligned}
 &H^*_n(s_n)=\min_{p(t), t\in[t_n^s, t_n^e]} \mathrm{E}_{I_n, E_{PFC,n}, q_n} \left[cost_{e,n} + cost_{p,n}\left(s_n^e\right) \right] \\
&+  \alpha \mathrm{E}_{I_n, E_{PFC,n}, q_n}\left[H^*_{n+1}(g(s_{n}, p(t), I_n, E_{PFC, n}, q_n))\right] \\
s.t.
&~ \eqref{eqn:p_ac}, \eqref{eqn:BESScons}, \text{and}  \\
& -P_{max} \leq p(t)\leq P_{max} ~\forall t\in[t^s_n, t^e_n],
\end{aligned}$}
\end{equation}
where $cost_{p,n}$, $cost_{e,n}$ and $s_n^e$ are defined in \eqref{eqn:2}, \eqref{eqn:charging_cost}, and \eqref{eqn:6}, respectively. $H^*_n(s_n)$ is the optimal value at the $n^{th}$ stage of the multi-stage problem, $\alpha \in (0, 1)$ is a discounting factor, and
$g(s_{n}, p(t), I_n, E_{PFC,n}, q_n):=s_{n+1}$  describes the state transition given by \eqref{eqn:6} and \eqref{eqn:7}.

In practice, the tendering period of the service contract signed with the TSO (in the order of months) is much longer than the duration of one stage in the above formulation (in the order of seconds or minutes).  Moreover, the distributions of $I_n$, $E_{PFC, n}$, and $q_n$ are i.i.d. Thus, Problem \eqref{eqn:multistage} can be regarded as an infinite-horizon dynamic programming problem with stationary policy. In other words, the subscripts $n$ and $n+1$ in \eqref{eqn:multistage} can be removed.

Problem \eqref{eqn:multistage} requires the optimization of a continuous time function $p(t)$. When the electricity price $c_e$ remains constant within an $I$ period,  there always exists an optimal solution where battery is always charged or discharged at the full rate $P_{max}$ until a prescribed SoC target has been reached or the $I$ interval has ended. Then, finding the optimal charging/discharging policy is equivalent to finding an optimal target SoC $\pi\in[0, 1]$. This is because charging/discharging cost during an $I$ period is only related to the total energy charged or discharged, regardless of when and how fast the charging or discharging is.

 Under the full-rate policy, the battery charges/discharges at a rate $P_{max}$ until the target SoC has been reached or the $I$ interval has ended. Thus, the charging cost  \eqref{eqn:charging_cost} during $I_n$ is equal to the following, where $\pi$ is the target SoC.
 \begin{eqnarray}
&& cost_{e}(s_n, \pi)  =\\
&&\begin{cases}
\frac{c_e}{\eta}\min(P_{max} I_n,(\pi-s_n)E_{max}) & \text{if}~s_n< \pi \\ \nonumber
-c_e{\eta}\min(P_{max} I_n, (s_n-\pi)E_{max}) & \text{if}~ s_n>\pi \\
0 & \text{if}~ s_n=\pi
\end{cases}.
 \end{eqnarray}
Likewise, \eqref{eqn:6} can be written as a function of $s_n$ and $\pi$:
\begin{eqnarray}
s_n^e(s_n, \pi) = s_n+\mathrm{sgn}(\pi-s_n)\min\left( \frac{P_{max} I_n}{E_{max}}, |\pi-s_n| \right),
\end{eqnarray}
 where $\mathrm{sgn}(\cdot)$ is the sign function.

We are now ready to rewrite Problem \eqref{eqn:multistage} into the following Bellman's equation, where subscript $n$ is omitted because the problem is an infinite-horizon problem with stationary policy.
\begin{equation}\label{eqn:bellman}
\resizebox{\hsize}{!}{$H^*(s)=\min_{\pi\in[0,1]} h(s, \pi) + \alpha \mathrm{E}_{I, E_{PFC}, q}\left[ H^*(g(s, \pi, I, E_{PFC}, q))\right],$}
\end{equation}
where
\begin{equation}\label{eqn:1stage}
h(s, \pi)= \mathrm{E}_I\left[cost_e(s, \pi)\right] +  \mathrm{E}_{I, E_{PFC}, q} \left[ cost_p\left( s^e(s, \pi)\right)\right]
\end{equation}
 is the expected one stage cost. With a slight abuse of notation, define
 \begin{eqnarray}
&&g(s, \pi, I, E_{PFC}, q)  = \nonumber \\
&& \resizebox{0.9\hsize}{!}{$\left [\frac{1}{E_{max}}\left(\begin{array}{ll}
 s E_{max}+\mathrm{sgn}(\pi-s) \min\left(P_{max} I, |\pi-s|E_{max}\right)\\
  + (\mathbf{1}_{q=1}\eta - \mathbf{1}_{q=-1}\frac{1}{\eta} ) E_{PFC}
  \end{array}\right)\right]_0^1$} \nonumber
 \end{eqnarray}
as the state transition. More specifically, in \eqref{eqn:1stage}
\begin{eqnarray}\label{eqn:12}
&& \mathrm{E}_I\left[cost_e(s, \pi)\right] \\ \nonumber
& =& \left( \mathbf{1}_{\pi> s}\frac{1}{\eta} - \mathbf{1}_{\pi<s}\eta \right)c_e \times \\
&&\left( \int_0^{Q_1} P_{max} x f_I(x) dx + |\pi-s|E_{max} \tilde{F}_I\left(Q_1 \right) \right), \nonumber
\end{eqnarray}
where
\begin{equation}\label{eqn:q1}
Q_1=\frac{|\pi-s|E_{max}}{P_{max}}
\end{equation}
is the minimum time to charge or discharge the battery from $s$, the initial SOC at this stage, to $\pi$, the target SOC.
Likewise,
\begin{eqnarray}\label{eqn:14}
&&\mathrm{E}_{I, E_{PFC}, q} \left[ cost_p\left( s^e\right)\right] = \mathrm{E}_{I}\left[\overline{cost_p}(s^e) \right] \\
&=&\resizebox{0.85\hsize}{!}{$ \begin{cases}
 \int_0^{Q_1}\overline{cost_p}\left(s+\frac{P_{max} x}{E_{max}}\right)f_I(x)dx + \overline{cost_p}(\pi) \tilde{F}_I\left(Q_1\right) & s \leq \pi \\
 \int_0^{Q_1}\overline{cost_p}\left(s-\frac{P_{max} x}{E_{max}}\right)f_I(x)dx + \overline{cost_p}(\pi) \tilde{F}_I\left(Q_1 \right) & s> \pi
 \end{cases}$} \nonumber,
\end{eqnarray}
where
\begin{eqnarray}\label{eqn:15}
&&\overline{cost}_p(s^e)= \mathrm{E}_{E_{PFC}, q} \left[ cost_p\left( s^e\right)\right] \\
&=&  c_pp_1  \mathrm{E}_{E_{PFC}}\left[\left( E_{PFC}-\frac{E_{max}(1-s^e)}{\eta} \right)^+\right] \nonumber\\
&&+c_p p_{-1} \mathrm{E}_{E_{PFC}}\left[\left(E_{PFC}- \eta E_{max}s^e \right)^+\right] \nonumber
\end{eqnarray}
is the expected regulation failure penalty in the case that the SOC is $s^e$ when the frequency excursion occurs.

\section{Optimal BESS Control}\label{sec:optimalSoC}

In general, the optimal decision at each stage of a dynamic programming is a function of the system state observed at that stage. That is, we need to calculate the optimal charging target $\pi^*(s)$ as a function of the BESS SoC $s$ observed at the beginning of each $I$ interval. Interestingly, this is not necessary in our problem. The following theorem states that the optimal target SoC is a range that is \emph{invariant} with respect to the BESS SoC $s$ at each stage. Furthermore, the range converges to a single point $\pi^*$ that is \emph{independent} of $s$ when $\eta\rightarrow1$ or $c_e\ll c_p$,. This result is extremely appealing: we can pre-calculate $\pi^*$ for all stages offline. This greatly simplifies the system operation.

\begin{theorem}\label{thm:1}
The optimal target SoC that minimizes the cost $H^*(s)$ in \eqref{eqn:bellman} is a range $[\pi^*_{low}, \pi^*_{high}]$, where $\pi^*_{low}$ and $\pi^*_{high}$ are fixed in all stages regardless of the system state $s$. During each $I$ interval, the BESS is charged or discharged when its SoC falls outside the range, and remains idle when its SoC is in the range. In other words, at each stage, the optimal target SoC $\pi^*$ is set as
\begin{equation}
\pi^*:=\pi^*(s)=
\begin{cases}
\pi^*_{low} & \text{if}~s<\pi^*_{low} \\
\pi^*_{high} & \text{if}~s>\pi^*_{high} \\
s & \text{if}~s\in [\pi^*_{low}, \pi^*_{high}]
\end{cases}.
\end{equation}
Moreover, $\pi^*_{low}$ and $\pi^*_{high}$ converge to a single point $\pi^*$ when $\eta\rightarrow1$ or $c_e\rightarrow 0$.

\end{theorem}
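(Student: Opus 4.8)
The plan is to work with the stationary Bellman equation~\eqref{eqn:bellman} and to prove the theorem in three steps: (i) show that $H^*$ is convex on $[0,1]$; (ii) use this convexity, together with the ``bid--ask'' kink that the round-trip efficiency $\eta\le1$ introduces into the charging cost, to read off the threshold form of $\pi^*(s)$; and (iii) pass to the $\eta\to1$ and $c_e\to0$ limits. Because \eqref{eqn:multistage} is an infinite-horizon discounted DP with i.i.d.\ exogenous data, $H^*$ is the unique fixed point of the Bellman operator $T$ (a contraction since $\alpha\in(0,1)$ and $\mathrm{E}[E_{PFC}]<\infty$), and $H^*=\lim_k T^k H^{(0)}$ for any bounded $H^{(0)}$; every structural property will be obtained by showing $T$ preserves it and then letting $k\to\infty$.

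For step~(i) I would exhibit a class $\mathcal{C}$ of convex functions with $T(\mathcal{C})\subseteq\mathcal{C}$. The ingredients are: (a) for fixed $s$, $\mathrm{E}_I[cost_e(s,\cdot)]$ in~\eqref{eqn:12} is convex on each of $\{\pi\le s\}$ and $\{\pi\ge s\}$ and, using $\tilde F_I(0)=1$ (the cross terms in its $\pi$-derivative cancel), has left/right one-sided derivatives at $\pi=s$ equal to $-c_eE_{max}\eta$ and $c_eE_{max}/\eta$, i.e.\ exactly the unconstrained ``sell/buy'' slopes; (b) $\overline{cost}_p$ in~\eqref{eqn:15} is convex, being a nonnegative combination of expectations of $(\,\cdot\,)^{+}$ composed with affine maps of $s^e$; and (c) for $W\in\mathcal{C}$, $\pi\mapsto\mathrm{E}_{E_{PFC},q}[W(g(\pi,\cdot,E_{PFC},q))]$ is convex. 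Granting (a)--(c), $h(s,\pi)+\alpha\,\mathrm{E}[W(g)]$ is convex in $\pi$, its minimum over $\pi\in[0,1]$ is attained, and one obtains convexity of $TW$ in $s$. I expect the real difficulty to be concentrated here, for two reasons: the saturation $[\,\cdot\,]_0^1$ in the transition~\eqref{eqn:7} generically turns a convex $W$ into a function with a \emph{concave} corner at the point where the battery hits $0$ or $1$; and the within-$I$ constraint $\min(P_{max}I,\cdot)$ makes the realized $s^e(s,\pi)$ a random function of both $s$ and $\pi$, not of $\pi$ alone, which breaks naive joint convexity. The remedy I would pursue is to put into $\mathcal{C}$ the extra requirement that the one-sided derivatives of $W$ at $0$ and $1$ lie in the band $[-c_eE_{max}/\eta,\,-c_eE_{max}\eta]$ (roughly: $W$ is nonincreasing on $[0,1]$, no steeper than the sell slope); then the clamp contributes only convex corners, (c) holds, and one checks $T$ preserves this boundary condition---intuitively because from a saturated SoC the cheapest continuation is simply to discharge at the sell price, which pins the endpoint slopes. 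The timing constraint I would handle by first proving everything in the ``reachable'' regime where $I$ is a.s.\ long enough that $s^e=\pi$ (so $cost_e=c_eE_{max}\phi(\pi-s)$ with $\phi$ convex, $\partial\phi(0)=[\eta,1/\eta]$ up to the scale $c_eE_{max}$), and then arguing that relaxing to the general~\eqref{eqn:12}--\eqref{eqn:14} only bends the cost away from $\pi=s$ without altering the slopes at $\pi=s$, hence without moving the idle band.

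Given convexity, step~(ii) is a subdifferential calculation. Write the stage objective as $\mathrm{E}_I[cost_e(s,\pi)]+\Psi(\pi)$ with $\Psi(\pi):=\mathrm{E}_I[\overline{cost}_p(s^e(s,\pi))]+\alpha\,\mathrm{E}[H^*(g(\pi,\cdot))]$, convex by step~(i). At $\pi=s$ the $\pi$-subdifferential of $\mathrm{E}_I[cost_e(s,\cdot)]$ equals $c_eE_{max}[\eta,1/\eta]$, so $0$ lies in the subdifferential of the stage objective at $\pi=s$ \emph{iff} $\Psi'(s)\in[-c_eE_{max}/\eta,\,-c_eE_{max}\eta]$; if $\Psi'(s)<-c_eE_{max}/\eta$ it is optimal to charge up to the smallest $\pi$ with $\Psi'(\pi)=-c_eE_{max}/\eta$, and if $\Psi'(s)>-c_eE_{max}\eta$ to discharge down to the largest $\pi$ with $\Psi'(\pi)=-c_eE_{max}\eta$. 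Setting $\pi^*_{low}$ and $\pi^*_{high}$ to be these two values (or $0$, resp.\ $1$, if no interior solution exists), monotonicity of $\Psi'$ gives $\pi^*_{low}\le\pi^*_{high}$ and reproduces verbatim the piecewise rule in the theorem statement; and since $c_e,\eta,E_{max}$ and the function $\Psi$ carry no stage index (the infinite-horizon stationarity is exactly what lets us drop $n$), neither do $\pi^*_{low},\pi^*_{high}$---this is the asserted state-invariance. Finally, step~(iii): as $\eta\to1$ the band $[-c_eE_{max}/\eta,\,-c_eE_{max}\eta]$ shrinks to $\{-c_eE_{max}\}$, squeezing $\pi^*_{low},\pi^*_{high}$ onto the unique root of $\Psi'(\pi)=-c_eE_{max}$; as $c_e\to0$ it shrinks to $\{0\}$, so both thresholds converge to the unconstrained minimizer of $\Psi$---equivalently of $\mathrm{E}_I[\overline{cost}_p]+\alpha\,\mathrm{E}[H^*(g(\cdot))]$---which is unique by strict convexity (or one takes the midpoint of the flat minimum). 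I would present~(iii) as a short continuity argument once the $\Psi'$-characterization of~(ii) is in hand.
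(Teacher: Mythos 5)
Your proposal follows essentially the same route as the paper: establish convexity of $H^*$ through the contraction structure of the Bellman operator (the paper's Proposition~1, proved by showing $\partial^2 H^*/\partial s^2$ is the fixed point of a monotone contraction with a nonnegative forcing term), then run a first-order analysis of the stage cost whose kink at $\pi=s$ has one-sided slopes $\eta c_e E_{max}$ and $c_e E_{max}/\eta$, producing the two state-independent thresholds (the roots of the paper's $r_2E_{max}+u$ and $r_1E_{max}+u$) that merge when $\eta\to1$ or $c_e\to0$. The one point to tighten is that your $\Psi$ as written still depends on $s$ through the event $\{I<Q_1\}$; the state-invariance of the thresholds rests on the fact that $\tilde{F}_I(Q_1)$ multiplies \emph{every} term of $\partial H/\partial\pi$ and therefore cancels from the first-order condition --- exactly the paper's identity $\partial H/\partial\pi=\big(r(s,\pi)E_{max}+u(\pi)\big)\tilde{F}_I(Q_1)$ --- so your ``reachable regime first, then relax'' step should be phrased as this cancellation rather than as a statement only about the slopes at $\pi=s$.
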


To prove Theorem \ref{thm:1}, let us first characterise the sufficient and necessary conditions for optimal $\pi^*$. For convenience, rewrite \eqref{eqn:bellman} into
$$H^*(s)=\min_{\pi\in[0, 1]} H(s, \pi),$$
where
\begin{equation}\label{eqn:Bellman1}
H(s, \pi)=   h(s,\pi) +\alpha \mathrm{E}_{I, E_{PFC}, q}\left[H^*(g(s, \pi, I, E_{PFC}, q)\right].
\end{equation}
Taking the first order derivative $\frac{\partial H(s, \pi)}{\partial\pi}$, we obtain the following after some manipulations.
\begin{eqnarray}\label{eqn:derivative_pi_2}
&&\frac{\partial H(s, \pi)}{\partial\pi} =\frac{\partial h(s,\pi)}{\partial \pi} \\
& +& \resizebox{0.9\hsize}{!}{$\alpha p_1\tilde{F}_I\left(Q_1 \right)\int_0^\frac{(1-\pi)E_{max}}{\eta} \frac{\partial H^*\left(s\right)}{\partial s}\bigg|_{s=\pi+\frac{\eta e }{E_{max}}}f_{E_{PFC}}(e) de $}\nonumber\\
&+&\resizebox{0.9\hsize}{!}{$\alpha p_{-1} \tilde{F}_I\left(Q_1 \right) \int_0^{\eta \pi E_{max}} \frac{\partial H^*\left( s\right)}{\partial s}\bigg|_{s=\pi-\frac{e}{\eta E_{max}}} f_{E_{PFC}}(e)de $}.\nonumber
\end{eqnarray}
Specifically,
\begin{equation}\label{eqn:1stage_1deri}
\frac{\partial h(s, \pi)}{\partial \pi}=\frac{\partial}{\partial \pi} \mathrm{E}_I\left[cost_e(s, \pi)\right] + \frac{\partial}{\partial \pi} \mathrm{E}_{I} \left[ \overline{cost_p}\left( s^e(s, \pi)\right)\right],
\end{equation}
where
\begin{eqnarray}\label{eqn:17}
\frac{\partial}{\partial \pi}\mathrm{E}_{I} [ cost_e(s,\pi)] =
\begin{cases}
\frac{1}{\eta}c_e E_{max}  \tilde{F}_I\left( Q_1\right)  &\text{if}~ \pi > s \\
\eta c_e E_{max} \tilde{F}_I\left(Q_1\right) &\text{if}~ \pi < s
\end{cases}
\end{eqnarray}
as a result of differentiating \eqref{eqn:12},
and
\begin{eqnarray}\label{eqn:18}
&&\frac{\partial}{\partial \pi}\mathrm{E}_{I} [ \overline{cost_p}(s^e)] =\frac{\partial\overline{cost_p}(\pi)}{\partial\pi} \tilde{F}_I\left(Q_1\right)\\
&=&  \left(c_p p_1 \frac{E_{max}}{\eta } \tilde{F}_{E_{PFC}}\left(\frac{E_{max}(1-\pi)}{\eta }\right) \right. \nonumber\\
&& \left. - c_pp_{-1}\eta E_{max}\tilde{F}_{E_{PFC}}\left(\eta E_{max}\pi \right) \right)\tilde{F}_I\left(Q_1 \right) \nonumber
\end{eqnarray}
as a result of differentiating \eqref{eqn:14}\eqref{eqn:15}.
Note that $\mathrm{E}_{I} [ cost_e(s,\pi)]$ is not differentiable at $\pi=s$ unless $\frac{1}{\eta}c_e = \eta c_e$ (or equivalently when $\eta=1$ or $c_e=0$).

Substituting \eqref{eqn:17} and \eqref{eqn:18} to \eqref{eqn:derivative_pi_2}, we have
$$\frac{\partial H(s, \pi)}{\partial\pi} = \big(r(s,\pi)E_{max}+u(\pi)\big)\tilde{F}_I\left(Q_1 \right),$$
where $r(s, \pi)$ is defined in \eqref{eqn:r}  and
\begin{figure*}[t]
\begin{equation}\label{eqn:r}
r(s,\pi) =
\begin{cases}
r_1(\pi):=\frac{1}{\eta}c_e +  \frac{c_pp_1}{\eta } \tilde{F}_{E_{PFC}}\left(\frac{E_{max}(1-\pi)}{\eta }\right) - c_pp_{-1}\eta \tilde{F}_{E_{PFC}}\left(\eta E_{max}\pi \right) &\text{if}~ \pi > s\\
r_2(\pi):=\eta c_e +  \frac{c_pp_1}{\eta } \tilde{F}_{E_{PFC}}\left(\frac{E_{max}(1-\pi)}{\eta }\right) - c_pp_{-1}\eta \tilde{F}_{E_{PFC}}\left(\eta E_{max}\pi \right) &\text{if}~ \pi < s
\end{cases}.
\end{equation}
\end{figure*}

\begin{eqnarray}\label{eqn:22}
&&u(\pi)= \alpha p_1\int_0^\frac{(1-\pi)E_{max}}{\eta } \frac{\partial H^*\left(\pi+\frac{\eta e }{E_{max}}\right)}{\partial \pi} f_{E_{PFC}}(e) de\nonumber \\
&+&\alpha p_{-1}  \int_0^{\eta \pi E_{max}}   \frac{\partial H^*\left( \pi-\frac{e}{\eta E_{max}}\right)}{\partial \pi} f_{E_{PFC}}(e)de .
\end{eqnarray}

 To avoid trivial solutions, we assume that the CCDF $\tilde{F}_I\left(Q_1\right)>0$ for all $s, \pi$. Thus, the sign of $\frac{\partial}{\partial \pi} H(s, \pi) $ is determined by that of $r(s,\pi)E_{max}+u(\pi)$. As a result, the necessary condition for optimal $\pi^*$ is
\begin{equation}\label{eqn:necessary}
\resizebox{\hsize}{!}{$r(s,\pi^*)E_{max}+u(\pi^*)
\begin{cases}
=r_2(0)E_{max}+u(0)\geq 0 & \text{if}~\pi^*=0  \\
=r_2(\pi^*)E_{max}+u(\pi^*) =0 & \text{if}~\pi^*\in(0, s) \\
=r_1(\pi^*)E_{max}+u(\pi^*) =0 & \text{if}~\pi^*\in(s, 1) \\
=r_1(1)E_{max}+u(1) \leq 0 & \text{if}~\pi^*=1
\end{cases},$}
\end{equation}
when $\pi^*\neq s$. On the other hand, when $\pi^*=s$,
\begin{equation}\label{eqn:necessary2}
\resizebox{0.85\hsize}{!}{$r_1(s^+)E_{max}+u(s^+)>0~\text{and}~r_2(s^-)E_{max}+u(s^-)<0.$}
\end{equation}

Now we proceed to show that the necessary conditions \eqref{eqn:necessary} and   \eqref{eqn:necessary2} are also sufficient conditions for optimal $\pi^*$. To this end, let us first prove the convexity of $H^*(s)$ in the following proposition.

\begin{proposition}\label{pro:convexH}
$H^*(s)$ is convex in $s$. In other words, $\frac{\partial^2 H^*(s)}{\partial s^2}\geq 0$ for all $s$.
\end{proposition}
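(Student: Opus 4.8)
The plan is to prove convexity of $H^{*}$ by value iteration, carrying a \emph{strengthened} induction hypothesis. Define the Bellman operator $(\mathcal{T}V)(s)=\min_{\pi\in[0,1]}\{h(s,\pi)+\alpha\,\mathrm{E}_{I,E_{PFC},q}[V(g(s,\pi,I,E_{PFC},q))]\}$ and the iterates $H^{(0)}\equiv 0$, $H^{(k+1)}=\mathcal{T}H^{(k)}$. Since $\alpha\in(0,1)$ and the one-stage cost is bounded uniformly in $s$ on the compact state set $[0,1]$ (which holds as soon as $\mathrm{E}[E_{PFC}]<\infty$), $\mathcal{T}$ is a sup-norm contraction, so $H^{(k)}\to H^{*}$ uniformly; as a pointwise limit of convex functions is convex, it suffices to show each $H^{(k)}$ is convex. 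I would in fact prove by induction that $H^{(k)}$ is convex on $[0,1]$ \emph{and} obeys the endpoint slope bounds $(H^{(k)})'(1^{-})\le c_pE_{max}/\eta$ and $(H^{(k)})'(0^{+})\ge -c_p\eta E_{max}$, both trivial for $H^{(0)}$.

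The first sub-step of the inductive step is to show that the ``after-charging'' cost $\Psi(w):=\overline{cost_p}(w)+\alpha\,\mathrm{E}_{E_{PFC},q}[H^{(k)}(g)]$, as a function of $w=s^{e}\in[0,1]$, is convex. I would write it as $\mathrm{E}_{E_{PFC},q}$ of the per-realization quantity $cost_p(w;E_{PFC},q)+\alpha H^{(k)}(g(w,E_{PFC},q))$. For $q=1$ this quantity equals $\alpha H^{(k)}(w+\eta E_{PFC}/E_{max})$ (convex, no saturation) as long as $w+\eta E_{PFC}/E_{max}\le 1$, and equals the \emph{affine} function $c_pE_{PFC}-\tfrac{c_pE_{max}}{\eta}(1-w)+\alpha H^{(k)}(1)$ of slope $c_pE_{max}/\eta$ once the battery saturates; the two pieces meet continuously, and the junction is convex precisely because $\alpha(H^{(k)})'(1^{-})\le c_pE_{max}/\eta$. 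The $q=-1$ branch is symmetric, using the bound at $0$. Averaging over $(E_{PFC},q)$ preserves convexity, so $\Psi$ is convex; I would also note $\overline{cost_p}$ is convex by inspection of \eqref{eqn:15}, which controls $\Psi'$.

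Next I would feed $\Psi$ into $H^{(k+1)}(s)=\min_{\pi\in[0,1]}\{\mathrm{E}_I[cost_e(s,\pi)]+\mathrm{E}_I[\Psi(s^{e}(s,\pi,I))]\}$. The obstruction is that $\mathrm{E}_I[cost_e(s,\pi)]$ is proportional to $\mathrm{E}_I[\min(P_{max}I,|\pi-s|E_{max})]$, which is \emph{concave} in $|\pi-s|$, so $H^{(k+1)}(s,\pi)$ is \emph{not} jointly convex and one cannot simply invoke partial minimisation. Instead I exploit the minimiser's structure: for $\pi\ge s$, writing $Z=\min(P_{max}I/E_{max},\pi-s)$, the bracketed term is $\mathrm{E}[\varphi_s(Z)]$ with $\varphi_s(z)=\tfrac{c_eE_{max}}{\eta}z+\Psi(s+z)$ convex in $z$, hence its $\pi$-derivative is $\varphi_s'(\pi-s)\tilde F_I(Q_1)$ and the best target solves $\Psi'(\pi^{*})=-c_eE_{max}/\eta$, a value \emph{independent of $s$} (call it $\pi^{*}_{low}$); one checks analogously that discharging gives a state-independent $\pi^{*}_{high}$ with $\Psi'(\pi^{*}_{high})=-\eta c_eE_{max}$, while on $s\in[\pi^{*}_{low},\pi^{*}_{high}]$ it is optimal to stay, $\pi^{*}=s$. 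Substituting: on the idle region $H^{(k+1)}(s)=\Psi(s)$, convex; on $\{s<\pi^{*}_{low}\}$, $H^{(k+1)}(s)=\mathrm{E}_{m}[\tfrac{c_eE_{max}}{\eta}\min(m,\pi^{*}_{low}-s)+\Psi(\min(s+m,\pi^{*}_{low}))]$ with $m=P_{max}I/E_{max}$, and for each fixed $m$ this integrand is convex in $s$ — its one-sided derivative is $\Psi'(s+m)$ before the battery reaches $\pi^{*}_{low}$ and the constant $-c_eE_{max}/\eta$ afterwards, the two agreeing at the switch point, so the derivative is nondecreasing — whence the expectation is convex; $\{s>\pi^{*}_{high}\}$ is symmetric. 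The three pieces glue with a continuous (hence nondecreasing) derivative at $\pi^{*}_{low}$ and $\pi^{*}_{high}$ because those points were defined by the first-order conditions, so $H^{(k+1)}$ is convex; a short computation, using the $p_1,p_{-1}$ weighting and $\alpha<1$, then re-establishes the endpoint slope bounds for $H^{(k+1)}$, closing the induction. Finally $H^{*}=\lim_k H^{(k)}$ is convex.

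I expect the main obstacle to be making this a genuinely self-contained induction: convexity of $H^{(k)}$, the state-invariant threshold form of the optimal target, and the endpoint slope bounds are mutually dependent and must be carried through \emph{simultaneously}, not proved in sequence. Within one iteration the delicate points are (i) the bookkeeping at the saturation boundaries $s\in\{0,1\}$ forced by the clip $[\cdot]_0^1$ in the transition $g$ — this is exactly why the hypothesis on $(H^{(k)})'$ at $0$ and $1$ is needed and why its preservation must be checked carefully — and (ii) verifying that the concave curvature contributed by $cost_e$ is \emph{exactly cancelled} at the optimal target, not merely dominated, so that no hidden non-convexity survives the minimisation. The remaining work is routine differentiation of \eqref{eqn:12}--\eqref{eqn:18} and the boundary cases where $\pi^{*}_{low}$ or $\pi^{*}_{high}$ coincides with $0$ or $1$.
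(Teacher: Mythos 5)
Your proposal is correct, but it takes a genuinely different route from the paper. The paper never propagates convexity through value iteration: it differentiates the Bellman equation twice at the optimum to obtain a linear integral fixed-point equation $H^{*(2)}=a(s)+DH^{*(2)}$ for the second derivative, shows the forcing term $a(s)$ is nonnegative (using the first-order conditions \eqref{eqn:necessary}--\eqref{eqn:necessary2} and $\overline{cost_p}''\geq 0$), verifies that $T=a+D$ satisfies Blackwell's conditions, and reads off $H^{*(2)}=\sum_i K_i\geq 0$ as the fixed point. Your argument instead carries convexity of the iterates $H^{(k)}$ through each Bellman update, strengthened by the endpoint slope bounds $(H^{(k)})'(1^-)\leq c_pE_{max}/\eta$ and $(H^{(k)})'(0^+)\geq -c_p\eta E_{max}$ to absorb the $[\cdot]_0^1$ saturation, and uses the threshold structure of the per-iteration optimal target to neutralize the concavity of $\mathrm{E}_I[cost_e]$ in $|\pi-s|$. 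What your approach buys: it works entirely with monotone one-sided first derivatives, so it is honest at the kinks that the paper's computation glosses over (the paper itself notes $\mathrm{E}_I[cost_e]$ is non-differentiable at $\pi=s$, so a classical $H^{*(2)}$ need not exist at $\pi^*_{low},\pi^*_{high}$), and it treats the clipping in the transition $g$ explicitly rather than differentiating through it; it also yields the state-invariant interval $[\pi^*_{low},\pi^*_{high}]$ of Theorem \ref{thm:1} as a by-product of each iteration. What the paper's approach buys: a much shorter argument and an explicit series representation of the curvature, at the price of implicit smoothness assumptions. The one step you assert rather than prove --- that the endpoint slope bounds are re-established for $H^{(k+1)}$ --- does close: one gets $(H^{(k+1)})'(1^-)\leq\Psi'(1^-)\leq c_pp_1E_{max}/\eta+\alpha p_{-1}\,(H^{(k)})'(1^-)\leq c_pE_{max}/\eta$, and symmetrically at $0$, so the induction is sound; in a final write-up this computation, and the boundary cases $\pi^*_{low},\pi^*_{high}\in\{0,1\}$, should be spelled out.
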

A key step to prove Proposition \ref{pro:convexH} is to show that $\frac{\partial^2 H^*(s)}{\partial s^2}$ is the fixed point of equation $f(s)=Tf(s)$, where operator $T$ is a contraction mapping. The details of the proof are deferred to Appendix \ref{append:A}.

Proposition \ref{pro:convexH} implies the following Lemma \ref{lem:4}, which further leads to Proposition \ref{pro:quasi}.
\begin{lemma}\label{lem:4}
Both $r_1(\pi)E_{max}+u(\pi)$ and $r_2(\pi)E_{max}+u(\pi)$ are increasing functions of $\pi$. Moreover, $r(s, \pi)E_{max}+u(\pi)$ is an increasing function of $\pi$.
\end{lemma}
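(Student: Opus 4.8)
\emph{Overall plan.} I would establish, in order: (a) $r_1(\pi)E_{max}+u(\pi)$ is nondecreasing in $\pi$; (b) $r_2(\pi)E_{max}+u(\pi)$ is nondecreasing in $\pi$ (an almost identical argument); (c) the combined claim for $r(s,\pi)E_{max}+u(\pi)$. Step (c) is immediate from (a)--(b): by \eqref{eqn:r}, $r(s,\pi)E_{max}+u(\pi)$ equals $r_2(\pi)E_{max}+u(\pi)$ on $\{\pi<s\}$ and $r_1(\pi)E_{max}+u(\pi)$ on $\{\pi>s\}$, and at $\pi=s$ it jumps \emph{upward}, because $r_1(\pi)-r_2(\pi)=(\tfrac1\eta-\eta)c_e\ge0$ for $0<\eta\le1$ and $c_e\ge0$; a function that is nondecreasing on each of two abutting pieces and jumps up between them is nondecreasing. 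It also helps to record that $r_1$ and $r_2$ are \emph{separately} nondecreasing: the constants $\tfrac1\eta c_e,\eta c_e$ in \eqref{eqn:r} are irrelevant and, since the CCDF $\tilde F_{E_{PFC}}$ is nonincreasing, $\tilde F_{E_{PFC}}(\tfrac{E_{max}(1-\pi)}{\eta})$ increases in $\pi$ while $-\tilde F_{E_{PFC}}(\eta E_{max}\pi)$ increases in $\pi$. Thus the only thing that could go wrong is the $u(\pi)$ term, and the work is to show it cannot.

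\emph{The differentiated identity.} I would compute $\tfrac{d}{d\pi}\big(r_i(\pi)E_{max}+u(\pi)\big)$ and show it is $\ge0$. Differentiating \eqref{eqn:r} (using $\tilde F_{E_{PFC}}'=-f_{E_{PFC}}$) gives, for both $i$, $\tfrac{d}{d\pi}\big(r_i(\pi)E_{max}\big)=\tfrac{c_pp_1E_{max}^2}{\eta^2}f_{E_{PFC}}\!\big(\tfrac{E_{max}(1-\pi)}{\eta}\big)+c_pp_{-1}\eta^2E_{max}^2f_{E_{PFC}}(\eta E_{max}\pi)$, manifestly nonnegative (the $c_e$ term drops out). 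Differentiating $u(\pi)$ in \eqref{eqn:22} by the Leibniz rule: because the integrand $\partial H^*/\partial s$ is evaluated at a post-excursion SoC that moves at unit rate in $\pi$, the ``interior'' part of $u'(\pi)$ is $\alpha p_1\!\int\!\tfrac{\partial^2H^*}{\partial s^2}f_{E_{PFC}}+\alpha p_{-1}\!\int\!\tfrac{\partial^2H^*}{\partial s^2}f_{E_{PFC}}\ge0$ by Proposition \ref{pro:convexH}, and two boundary terms remain, from the moving upper limits $\tfrac{(1-\pi)E_{max}}{\eta}$ (where the next SoC touches $1$) and $\eta\pi E_{max}$ (where it touches $0$): they equal $-\alpha p_1\tfrac{E_{max}}{\eta}\tfrac{\partial H^*}{\partial s}\big|_{s=1}f_{E_{PFC}}\!\big(\tfrac{(1-\pi)E_{max}}{\eta}\big)$ and $\alpha p_{-1}\eta E_{max}\tfrac{\partial H^*}{\partial s}\big|_{s=0}f_{E_{PFC}}(\eta\pi E_{max})$ respectively. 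Summing, $\tfrac{d}{d\pi}\big(r_i(\pi)E_{max}+u(\pi)\big)$ is the nonnegative convexity integral plus $p_1\tfrac{E_{max}}{\eta}\big(\tfrac{c_pE_{max}}{\eta}-\alpha\tfrac{\partial H^*}{\partial s}\big|_{s=1}\big)f_{E_{PFC}}\!\big(\tfrac{(1-\pi)E_{max}}{\eta}\big)+p_{-1}\eta E_{max}\big(c_p\eta E_{max}+\alpha\tfrac{\partial H^*}{\partial s}\big|_{s=0}\big)f_{E_{PFC}}(\eta E_{max}\pi)$, so it remains only to check $\tfrac{\partial H^*}{\partial s}\big|_{s=1}\le\tfrac{c_pE_{max}}{\alpha\eta}$ and $\tfrac{\partial H^*}{\partial s}\big|_{s=0}\ge-\tfrac{c_p\eta E_{max}}{\alpha}$.

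\emph{The main obstacle: the endpoint bounds.} These two endpoint derivative estimates are the heart of the matter. At $s=1$ every feasible $I$-interval action only discharges toward the target, which merely raises the selling revenue as $s\uparrow1$, so $\tfrac{\partial H^*}{\partial s}\big|_{s=1}\le0$ and its bound is free; likewise $\tfrac{\partial H^*}{\partial s}\big|_{s=0}\le0$, so the substantive point is the \emph{lower} bound at $s=0$. I would obtain it from the Bellman equation \eqref{eqn:bellman}: at $s=0$ only charging is available, and expanding $\partial H^*/\partial s$ there writes it as $-\tfrac{c_eE_{max}}{\eta}$ plus an $I$-average of $\big(r_1E_{max}+u\big)$ over a sub-interval of $[0,\pi^*_{low}]$, whose one-stage pieces are controlled by $c_eE_{max}/\eta$ and $c_p\eta E_{max}$; combined with the $\alpha$-contraction and the uniform convexity of $H^*$ this closes the loop. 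I expect this to be the delicate step, because a crude global Lipschitz bound on $H^*$ produces a constant $\sim(c_e+c_p)E_{max}/[\eta(1-\alpha)]$ that is too large when $\alpha$ is near $1$; one really needs the sharper, state-uniform control that the contraction-mapping fixed point of Appendix \ref{append:A} yields (and, for a clean closed-form coefficient, the regime $c_e\ll c_p$ emphasized in the paper makes $c_p\eta E_{max}+\alpha\,\partial_s H^*|_{s=0}\ge0$ transparent). Granting the endpoint bounds, $\tfrac{d}{d\pi}\big(r_i(\pi)E_{max}+u(\pi)\big)\ge0$ for $i=1,2$; with step (c) this gives the lemma.
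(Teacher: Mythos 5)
Your overall route is the paper's: show $\frac{d}{d\pi}\big(r_i(\pi)E_{max}+u(\pi)\big)\ge 0$ for $i=1,2$ by combining the monotonicity of the CCDF $\tilde F_{E_{PFC}}$ (for the $r_i$ part) with the convexity of $H^*$ from Proposition \ref{pro:convexH} (for the $u$ part), and then glue the two branches of $r(s,\cdot)$ together using the upward jump of size $(\tfrac{1}{\eta}-\eta)c_eE_{max}\ge 0$ at $\pi=s$. The $r_i'$ computation and the jump argument are exactly those of Appendix \ref{append:B}.

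The divergence, and the gap, is in $u'(\pi)$. The paper differentiates only the integrand of \eqref{eqn:22}, obtaining $u'(\pi)=\alpha p_1\int H^{*(2)}f_{E_{PFC}}\,de+\alpha p_{-1}\int H^{*(2)}f_{E_{PFC}}\,de\ge 0$ with no boundary contributions; you instead apply the Leibniz rule in full and retain the terms generated by the $\pi$-dependent upper limits, which forces you to prove the endpoint estimates $\partial_sH^*|_{s=1}\le c_pE_{max}/(\alpha\eta)$ and $\partial_sH^*|_{s=0}\ge -c_p\eta E_{max}/\alpha$. You do not prove these. The $s=0$ bound is only sketched, and you yourself concede that the obvious Lipschitz estimate scales like $1/(1-\alpha)$ and that the argument is only ``transparent'' in the regime $c_e\ll c_p$; moreover the sign claims $\partial_sH^*|_{s=1}\le 0$ and $\partial_sH^*|_{s=0}\le 0$ are asserted from a heuristic about discharging revenue rather than derived --- near $s=1$ a short $I$ interval leaves $s^e$ high and \emph{raises} the expected over-excursion penalty, so the sign at $s=1$ is not obvious. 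Hence the proposal is incomplete at precisely the step you flag as delicate. To your credit, you have surfaced a term that the paper's own one-line differentiation of \eqref{eqn:22} silently drops; a complete argument must either establish your endpoint bounds or invoke a hypothesis (e.g., $f_{E_{PFC}}$ vanishing at the clipping thresholds $\tfrac{(1-\pi)E_{max}}{\eta}$ and $\eta\pi E_{max}$) under which those boundary terms disappear --- but as written, yours does not close that loop.
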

The proof of the lemma is deferred to Appendix \ref{append:B}.

\begin{proposition}\label{pro:quasi}
$H(s, \pi)$ is a quasi-convex function of $\pi$. In other words, one of the following three conditions holds.
\begin{itemize}
\item[(a)] $\frac{\partial}{\partial \pi} H(s, \pi)\geq 0$ for all $\pi$.
\item[(b)] $\frac{\partial}{\partial \pi} H(s, \pi)\leq 0$ for all $\pi$.
\item[(c)] There exists a $\pi'$ such that $\frac{\partial}{\partial \pi} H(s, \pi)\leq 0$ when $\pi<\pi'$ and $\frac{\partial}{\partial \pi} H(s, \pi)\geq 0$ when $\pi>\pi'$.
\end{itemize}
\end{proposition}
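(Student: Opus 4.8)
The plan is to prove quasi-convexity of $H(s,\pi)$ in $\pi$ by leveraging Lemma \ref{lem:4}, which tells us that $r_1(\pi)E_{max}+u(\pi)$ and $r_2(\pi)E_{max}+u(\pi)$ are both increasing in $\pi$. Recall from the derivation preceding the proposition that
$$\frac{\partial H(s,\pi)}{\partial \pi} = \bigl(r(s,\pi)E_{max}+u(\pi)\bigr)\tilde{F}_I(Q_1),$$
and since we assume $\tilde{F}_I(Q_1)>0$ for all $s,\pi$, the sign of $\partial H/\partial\pi$ agrees with the sign of $r(s,\pi)E_{max}+u(\pi)$. The function $r(s,\pi)$ equals $r_1(\pi)$ on $\{\pi>s\}$ and $r_2(\pi)$ on $\{\pi<s\}$; because $\eta\le 1$ we have $\frac{1}{\eta}c_e \ge \eta c_e$, hence $r_1(\pi)\ge r_2(\pi)$ pointwise, so the combined function $r(s,\pi)E_{max}+u(\pi)$ has a (possibly upward) jump at $\pi=s$ but is increasing on each of the two open intervals $(0,s)$ and $(s,1)$ by Lemma \ref{lem:4}. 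The last sentence of Lemma \ref{lem:4} already asserts $r(s,\pi)E_{max}+u(\pi)$ is increasing in $\pi$ overall, so the sign function is monotone nondecreasing in $\pi$.

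From here the argument is a short case analysis on where this monotone function changes sign. First I would set $\phi(\pi):=r(s,\pi)E_{max}+u(\pi)$ and note it is nondecreasing on $[0,1]$ (increasing on each subinterval, with a nonnegative jump at $s$). If $\phi(\pi)\ge 0$ for all $\pi\in[0,1]$, then $\partial H/\partial\pi\ge 0$ everywhere and case (a) holds. If $\phi(\pi)\le 0$ for all $\pi$, case (b) holds. Otherwise $\phi$ takes both signs; since it is nondecreasing, the set $\{\pi:\phi(\pi)<0\}$ is an interval of the form $[0,\pi')$ (or $[0,\pi']$) and $\{\pi:\phi(\pi)>0\}$ lies above it, so there is a threshold $\pi'$ with $\partial H/\partial\pi\le 0$ for $\pi<\pi'$ and $\partial H/\partial\pi\ge 0$ for $\pi>\pi'$, which is case (c). In all three cases $H(s,\cdot)$ is quasi-convex, since a function whose derivative is nonpositive then nonnegative on an interval is quasi-convex.

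The one subtlety to handle carefully is the point $\pi=s$, where $\mathrm{E}_I[cost_e(s,\pi)]$ is not differentiable (as noted right after \eqref{eqn:18}), so $\partial H/\partial\pi$ has a jump discontinuity there. I would phrase the statement in terms of the one-sided derivatives: the left derivative at $\pi=s$ has the sign of $r_2(s^-)E_{max}+u(s^-)$ and the right derivative has the sign of $r_1(s^+)E_{max}+u(s^+)$, and since $r_1\ge r_2$ the jump is upward, so monotonicity of the sign pattern is preserved across $\pi=s$ and the quasi-convexity conclusion is unaffected. I expect this bookkeeping at the kink — making sure the ``increasing'' claim of Lemma \ref{lem:4} is applied consistently with one-sided limits and that the threshold $\pi'$ is well-defined even if it coincides with $s$ — to be the only real obstacle; the rest is an immediate consequence of monotonicity of $\phi$ together with $\tilde{F}_I(Q_1)>0$.
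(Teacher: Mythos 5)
Your argument is correct and is exactly the route the paper takes: the paper dismisses this proposition as ``straightforward from Lemma \ref{lem:4},'' and your write-up simply supplies the details, namely that $\frac{\partial}{\partial\pi}H(s,\pi)$ shares the sign of the nondecreasing function $r(s,\pi)E_{max}+u(\pi)$ (using $\tilde{F}_I(Q_1)>0$), so the derivative changes sign at most once, from nonpositive to nonnegative. Your careful treatment of the kink at $\pi=s$ via one-sided derivatives and the upward jump of size $\left(\tfrac{1}{\eta}-\eta\right)c_e E_{max}$ matches what the paper establishes in the proof of Lemma \ref{lem:4}.
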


%\begin{proof}
%The quasi-convexity of $H(s, \pi)$ is straightforward from Lemma \ref{lem:4}. When $r(s, 0)E_{max}+u(0) \geq 0$, $\frac{\partial}{\partial \pi} H(s, \pi)\geq 0$ for all $\pi$, and Condition (a) holds. Likewise, when $r(s, 1)E_{max}+u(1) \leq 0$, $\frac{\partial}{\partial \pi} H(s, \pi)\leq 0$ for all $\pi$, and Condition (b) holds, and the optimal $\pi^*=1$. Otherwise, there must exist a $\pi'$ such that Condition (c) holds.
%
%\end{proof}

The quasi-convexity of $H(s, \pi)$ is straightforward from Lemma \ref{lem:4}. It ensures that the necessary condition \eqref{eqn:necessary} and \eqref{eqn:necessary2} is also sufficient. We are now ready to prove our main result Theorem \ref{thm:1}.

\begin{proof}[Proof of Theorem \ref{thm:1}]
We calculate the optimal $\pi^*$ as follows. Let $\pi^*_{low}\in[0, 1]$ be the root of the equation
$$ r_1(\pi)E_{max}+u(\pi) =0.$$
In case the root does not exist\footnote{This happens when $r_1(0)E_{max}+u(0) >0$, i.e., $r_1(\pi)E_{max}+u(\pi) >0$ for all $\pi$, or when $ r_1(1)E_{max}+u(1)<0 $, i.e., $r_1(\pi)E_{max}+u(\pi) <0$ for all $\pi$.}, set $\pi^*_{low}=0$ if $r_1(0)E_{max}+u(0) >0$, and $\pi^*_{low}=1$ if $r_1(1)E_{max}+u(1) <0$. Similarly, define $\pi^*_{high}\in[0, 1]$ as the root of the equation
$$ r_2(\pi)E_{max}+u(\pi) =0.$$
In case the root does not exist, set $\pi^*_{high}=0$ if $r_2(0)E_{max}+u(0) >0$, and $\pi^*_{high}=1$ if $r_2(1)E_{max}+u(1) <0$.

From the definition,  $r_1(\pi)E_{max}+u(\pi)>r_2(\pi)E_{max}+u(\pi)$ for any given $\pi$. Thus, it always holds that $\pi^*_{low}\leq\pi^*_{high}$. From  the sufficient and necessary conditions in \eqref{eqn:necessary} and \eqref{eqn:necessary2}, we can conclude that
\begin{equation}
\pi^*=
\begin{cases}
\pi^*_{low} & \text{if}~s<\pi^*_{low} \\
\pi^*_{high} & \text{if}~s>\pi^*_{high} \\
s & \text{if}~s\in [\pi^*_{low}, \pi^*_{high}]
\end{cases}.
\end{equation}
In other words, the optimal target SoC is a range $[\pi^*_{low},\pi^*_{high}]$. Since $r_1(\pi)E_{max}+u(\pi)$ and $r_2(\pi)E_{max}+u(\pi)$ are not functions of $s$, $\pi^*_{low}$ and $\pi^*_{high}$ are independent of $s$. Thus, the range $[\pi^*_{low},\pi^*_{high}]$ is fixed for all stages regardless of the system state $s$.

Furthermore, when $\eta= 1$ or $c_e= 0$,  $r_1(\pi)=r_2(\pi)$ for all $\pi$. In this case, $\pi^*_{low}=\pi^*_{high}$. Thus, the optimal $\pi^*$ becomes a single point that remains constant for all system states $s$. This completes the proof.
\end{proof}

%\begin{remark}\label{rem:1}
%In practice, the electricity price $c_e$ is at the order of $\$0.1/$kWh. In contrast, the penalty rate  $c_p$ can easily be several orders of magnitude higher.  Thus, $c_e\ll c_p$ holds most of the time. In this case, we can assume that $c_e \rightarrow 0$ compared with $c_p$, and the range $[\pi^*_{low},\pi^*_{high}]$ converges to a single point $\pi^*$.
%\end{remark}
\begin{remark}
Usually, infinite-horizon dynamic programming problems are solved by value iteration or policy iteration methods \cite{bertsekas1995dynamic}.  Therein, an $N$-dimensional decision vector is optimized in each iteration, with each entry of the vector being the optimal decision corresponding to a system state. In our problem, the system state $s$ is continuous in $[0, 1]$. Discretizing it can lead to a large $N$. Fortunately, the results in this section show that the optimal decision is characterized by two scalars $\pi^*_{low}$ and $\pi^*_{high}$ that remain constant for all system states. Thus, the calculation of the optimal decision is greatly simplified. A brief discussion on the algorithm to obtain  $\pi^*_{low}$ and $\pi^*_{high}$ can be found in Appendix \ref{append:algorithm}.
\end{remark}

\section{Optimal BESS Planning}\label{sec:optimalPlan}
Obviously, the minimum operating cost $H^*(s)$ is a function of the BESS energy capacity $E_{max}$. On the other hand, the capital cost of acquiring and setting up the BESS increases with $E_{max}$. Let the capital cost be denoted as $Q(E_{max})$, which is an increasing function of $E_{max}$. In this section, we are interested in investigating the optimal $E_{max}$ that minimizes the total expected cost
$\lambda Q(E_{max}) + \mathrm{E}_s\left[H^*(s)\right],$
where $\lambda$ is a weighting factor that depends on the BESS life time, BESS degradation, and the tendering period.  $\mathrm{E}_s\left[H^*(s)\right]$ is the expected value of $H^*(s)$ over all initial SoC $s$ under the optimal charging operation.

The main result of this section is given in Theorem \ref{thm:capacity} below, which states that $H^*(s)$ is a decreasing convex function of $E_{max}$ for all $s$. As a result, $ \mathrm{E}_s\left[H^*(s)\right]$ is also a decreasing convex function of $E_{max}$. In other words, the marginal decrease of the $\mathrm{E}_s\left[H^*(s)\right]$ diminishes when $E_{max}$ becomes large. This implies the existence of a unique optimal $E_{max}$, at which the marginal increase of $Q(E_{max})$ is equal to the marginal decrease of $\mathrm{E}_s\left[H^*(s)\right]$, i.e.,
$$\lambda \frac{\partial Q(E_{max})}{\partial E_{max}} = - \frac{\partial \mathrm{E}_s\left[H^*(s)\right]}{\partial E_{max}}.$$

\begin{theorem}\label{thm:capacity}
The minimum operating cost $H^*(s)$ given in \eqref{eqn:bellman} is a decreasing convex function of $E_{max}$.
\end{theorem}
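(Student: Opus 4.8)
The plan is to reuse the value-iteration / contraction argument already employed for Proposition~\ref{pro:convexH}. Write $H^*$ as the limit of $H_{k+1}=TH_k$, $H_0\equiv 0$, where $T$ is the Bellman operator in \eqref{eqn:bellman}; $T$ is an $\alpha$-contraction in the sup norm, and the class $\mathcal{C}$ of functions that are, for every fixed $s$, nonincreasing and convex in $E_{max}$ is closed under pointwise limits. It therefore suffices to prove $T(\mathcal{C})\subseteq\mathcal{C}$: then $H^*\in\mathcal{C}$, and taking expectations over $s$ shows $\mathrm{E}_s[H^*(s)]$ is nonincreasing and convex in $E_{max}$, which gives the first-order condition for the optimal capacity stated just before the theorem. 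Equivalently, one may differentiate \eqref{eqn:bellman} and argue, as for $\partial^2 H^*/\partial s^2$ in Appendix~\ref{append:A}, that $\partial H^*/\partial E_{max}$ and $\partial^2 H^*/\partial E_{max}^2$ are the fixed points of contractions preserving, respectively, nonpositivity and nonnegativity.

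The first step I would take is to pass to \emph{un-normalized} coordinates: put $e:=sE_{max}\in[0,E_{max}]$, use the physical target energy $\pi E_{max}$, and set $W(e;E_{max}):=H^*(e/E_{max};E_{max})$. In these coordinates the charging cost \eqref{eqn:12} becomes independent of $E_{max}$ (it depends only on the energy actually moved), and the penalty \eqref{eqn:15} depends on $E_{max}$ only through its over-excursion term, which equals $\tfrac{c_p}{\eta}(e^e+\eta E_{PFC}-E_{max})^+$ and is manifestly nonincreasing and jointly convex in $(e^e,E_{max})$; the drift added during a $J$ interval is $E_{max}$-free. Moreover $W(\cdot;E_{max})$ is convex in $e$ for each $E_{max}$, being the composition of the convex function of Proposition~\ref{pro:convexH} with an affine map. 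The whole difficulty is then concentrated in one place: the truncation $[\,\cdot\,]_0^{E_{max}}$ of the post-excursion state (and the corresponding bound on the admissible target).

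To handle that truncation I would fold it into the penalty. After an over-excursion, with $z:=e^e+\eta E_{PFC}$ the ``would-be'' post-excursion energy, the combined contribution to the value is $\tfrac{c_p}{\eta}(z-E_{max})^++\alpha W(\min(z,E_{max});E_{max})$; using the convexity of $W$ in $e$ and a bound on the marginal value of stored energy at a full battery, $\partial_e W(E_{max};E_{max})\le c_p/(\eta\alpha)$ (a short consequence of the Bellman equation, using that near $e=E_{max}$ an over-excursion saturates the battery, so the continuation adds no marginal value there), this term is jointly convex and nonincreasing in $(z,E_{max})$. Composing with the $E_{max}$-free map $e^e\mapsto z$, taking the convexity-preserving expectation, and then the minimum over the admissible target keeps $W$ in $\mathcal{C}$, while the under-excursion branch carries no $E_{max}$ at all. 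For the monotonicity part one may alternatively run a coupling argument inside the same induction: for $E_{max}^{(2)}>E_{max}^{(1)}$ and $e\le E_{max}^{(1)}$ the larger battery can replicate any target of the smaller one, its truncation can only bind later, and after an over-excursion it absorbs at least as much energy, paying a penalty no larger and landing on the increasing branch of the convex $W_k$ at an energy no smaller; propagating $W_k(\cdot;E_{max}^{(2)})\le W_k(\cdot;E_{max}^{(1)})$ through the iteration gives the claim. Translating back, $H^*(s;E_{max})=W(sE_{max};E_{max})$ is the restriction of a jointly convex, $E_{max}$-nonincreasing function to the affine ray $E_{max}\mapsto(sE_{max},E_{max})$, hence nonincreasing and convex in $E_{max}$ for every $s$.

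I expect the truncation term to be the main obstacle. Unlike the one-stage cost and the drift, $[\,\cdot\,]_0^{E_{max}}$ couples the decision, the random shock and the parameter $E_{max}$ in a way that is piecewise-linear but not monotone, and since $W$ is convex but \emph{not} monotone in the energy state, neither ``convex $\circ$ truncation'' nor a naive dominance argument goes through by itself; the overflow-into-penalty folding and the monotone coupling between batteries of different sizes are precisely what restore convexity and monotonicity, respectively, and the only genuinely new estimate along the way is the bound on the slope of $W$ at the right endpoint of the state space.
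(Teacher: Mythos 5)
Your route is genuinely different from the paper's. The paper proves convexity by differentiating \eqref{eqn:bellman} twice in $E_{max}$, showing $\partial^2 H^*/\partial E_{max}^2$ is the fixed point of a contraction whose source term $\tilde a(s,E_{max})$ is nonnegative \emph{because of the first-order optimality condition on $\pi^*$}, and proves monotonicity by an envelope computation on the single-stage cost $h^*$ followed by monotone value iteration. You instead try to propagate the structural property ``nonincreasing and jointly convex in $(e,E_{max})$'' through the Bellman operator in un-normalized coordinates. Your analysis of the truncation $[\,\cdot\,]_0^{E_{max}}$ is the best part of the proposal: folding the overflow into the penalty and checking the kink at $z=E_{max}$ via the slope bound $\partial_e W\le c_p/(\eta\alpha)$ is exactly the right condition there (and the same estimate is what you would need to repair the coupling step, where ``landing on the increasing branch at an energy no smaller'' is not by itself favorable --- you must trade the extra stored energy against the penalty saved).

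The genuine gap is elsewhere: the Bellman operator does \emph{not} preserve joint convexity through the $I$-interval step, and your proposal never confronts this. Under the full-rate policy the post-interval energy is $e^e=\min(\rho,\,e+P_{max}I)$ on the charging side (a concave function of $(e,\rho)$), and the expected charging cost is concave in the target on each side of $s$: from \eqref{eqn:17} its derivative is $\tfrac{1}{\eta}c_eE_{max}\tilde F_I(Q_1)$ with $Q_1=|\pi-s|E_{max}/P_{max}$, which \emph{decreases} as the target moves away from $s$. So the objective inside the minimum is not jointly convex in (state, target, $E_{max}$), and ``the minimum over the admissible target keeps $W$ in $\mathcal{C}$'' does not follow; the paper itself only obtains quasi-convexity in $\pi$ (Proposition \ref{pro:quasi}), not convexity. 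This is not a technicality: in the paper's proofs the offending concave contributions show up as the terms proportional to $f_I(Q_1^*)\bigl(r(s,\pi^*)E_{max}+u(\pi^*)\bigr)$ in $a(s)$ and $\tilde a(s,E_{max})$, and they are shown to be harmless only by invoking the necessary conditions \eqref{eqn:necessary}--\eqref{eqn:necessary2} at the optimum. A structural induction on the operator has no access to that cancellation, so as written your argument cannot close; you would need either to restrict to the optimal policy and reintroduce an envelope/first-order-condition argument (at which point you have essentially reconstructed Appendix \ref{append:C}), or to find a reparametrization in which the stage cost and transition really are jointly convex, which the target-SoC parametrization is not.
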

The proof of Theorem \ref{thm:capacity} is deferred to Appendix \ref{append:C}

\section{Numerical Results}\label{sec:numerical}

In this section, we validate our analysis and investigate how different system parameters affect the optimal BESS operation and planning. The simulations are conducted using the real-time frequency measurement data collectd in Sacramanto, CA, as shown in Fig. \ref{fig:data}. The sample rate is 10 Hz (i.e., 1 measurement per 0.1 seconds). The data set, provided by FNET/GridEye \cite{7265090}, includes a total of 2,555,377 samples, accounting for about 71 hours of frequency measurement. Suppose that a frequency excursion event occurs when the system frequency deviates outside a dead band of 10mHz around the normative frequency. The empirical distributions of $I$, $J$, and $q$ derived from the measurement data are plotted in Fig. \ref{fig:CDF}.

\begin{figure}
\centering
\includegraphics[width=0.5\textwidth]{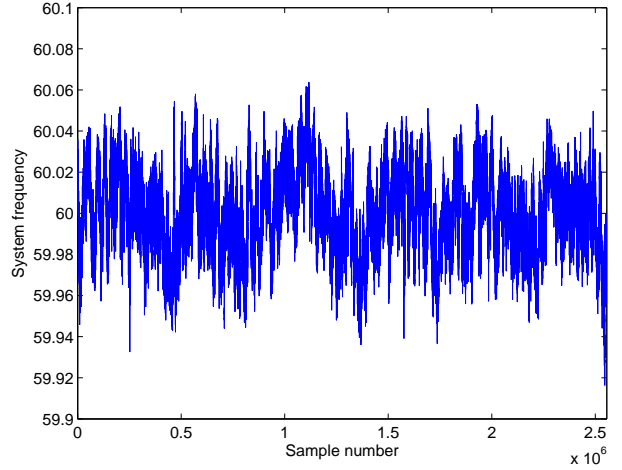}
\caption{System frequency measured at  Sacramanto, CA}\label{fig:data}
\end{figure}

An underlying assumption of our analysis is that $I_n$, $J_n$ and $q_n$ are i.i.d. for different $n$, respectively, and that they are mutually independent. To validate this assumption, we plot the auto-correlations and cross-correlations of the variables in Figs. \ref{fig:auto} and \ref{fig:cross}, respectively. As we can see from Fig. \ref{fig:auto}, the auto-correlations of the variables reach the peak when the time lag is 0 and are close to zero at non-zero time lags, implying that they are approximately independent for different $n$. Likewise, Fig. \ref{fig:cross} shows that the cross-correlations of the variables are all close to zero, implying that $I$, $J$, and $q$ are mutually independent.

\begin{figure}
\centering
\includegraphics[width=0.5\textwidth]{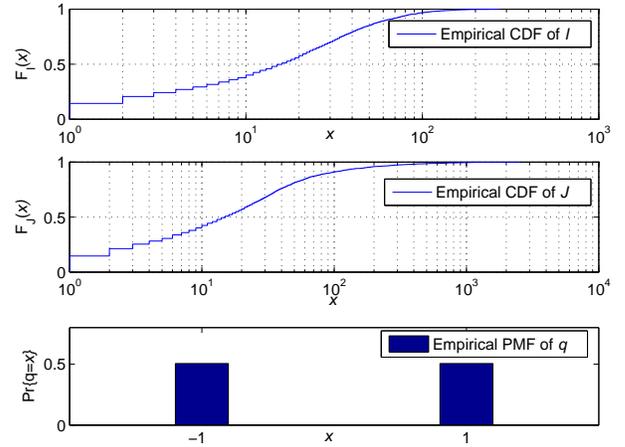}
\caption{Empirical distributions of $I$, $J$, and $q$}\label{fig:CDF}
\end{figure}

\begin{figure}
\centering
\includegraphics[width=0.5\textwidth]{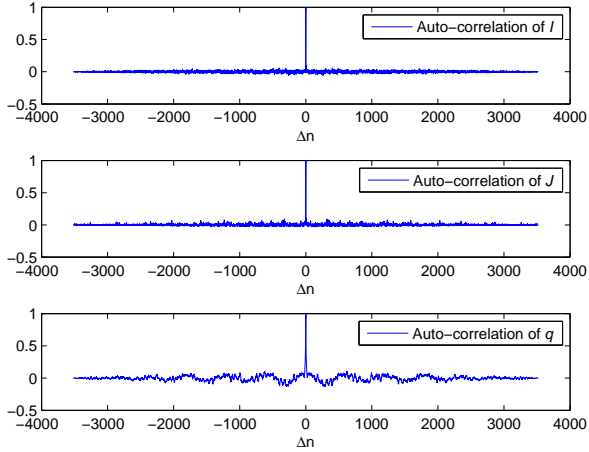}
\caption{Auto-correlations of $I$, $J$, and $q$}\label{fig:auto}
\end{figure}

\begin{figure}
\centering
\includegraphics[width=0.5\textwidth]{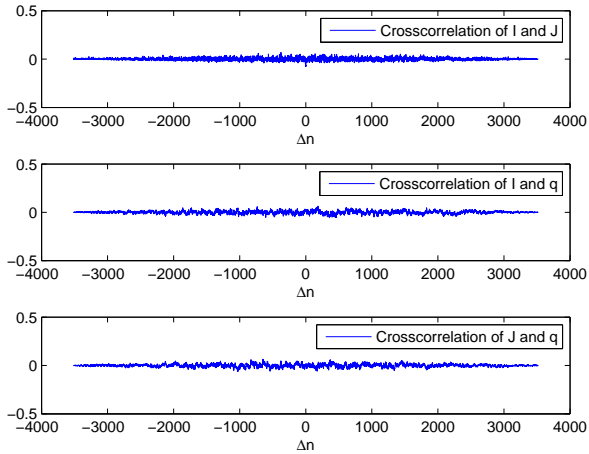}
\caption{Cross-correlations of the variables.}\label{fig:cross}
\end{figure}

Before proceeding, let us verify Proposition \ref{pro:convexH}, the convexity of $H^*(s)$ with respect to $s$, which is a key step in the proof of our main result. Unless otherwise stated, we assume that $E_{max}=0.1$MWh, $P_{max}= 1$MW, $P_{PFC}$ is uniformly distributed in $[0.5, 1]$MW, and the discount factor $\alpha=0.9$ in the rest of the section. In Fig. \ref{fig:H_vs_s}, we plot $H^*(s)$ against $s$  when $c_e=\$0.1/$kWh and $c_p=\$10/$kWh. The figure verifies that $H^*(s)$ is indeed a convex function of $s$, as proved in  Proposition \ref{pro:convexH}.

 \begin{figure}
\centering
\includegraphics[width=0.5\textwidth]{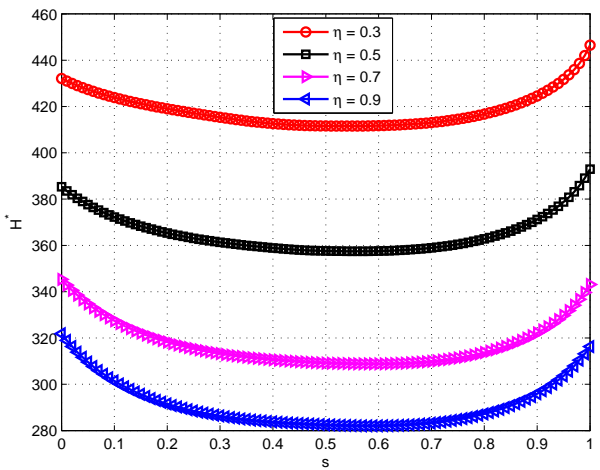}
\caption{Convexity of $H^*(s)$ with respect to $s$ when $c_e=\$0.1/$kWh and $c_p=\$10/$kWh.}\label{fig:H_vs_s}
\end{figure}

\subsection{Optimal Target SoC}

In this subsection, we investigate the effect of various system parameters on the optimal target SoC $\pi_{low}^*$ and $\pi_{high}^*$. The  settings of system parameters are the same as that in Fig. \ref{fig:H_vs_s} unless otherwise stated. In Fig. \ref{fig:pi_vs_eta}, $\pi_{low}^*$ and $\pi_{high}^*$ are plotted against $\eta$. It can be seen that when the battery efficiency $\eta$ is  low, $[\pi^*_{low}, \pi^*_{high}]$ is a relatively wide interval. The interval narrows when $\eta$ becomes large, and converges to a single point when $\eta\rightarrow 1$. This is consistent with Theorem \ref{thm:1}. Recall that there is no need to charge or discharge the battery during an $I$ interval if the SoC at the beginning of the $I$ interval is already within $[\pi^*_{low}, \pi^*_{high}]$. The result in  Fig. \ref{fig:pi_vs_eta} is intuitive in the sense that when the battery efficiency is low, adjusting SoC during the $I$ intervals is more costly due to  power losses. Thus, the interval $[\pi^*_{low}, \pi^*_{high}]$ is wider so that the battery SoC does not need to be adjusted too often.

In Fig. \ref{fig:pi_vs_cp}, $\pi^*_{low}$ and $\pi^*_{high}$ are plotted against $c_p$ when $c_e=\$0.1/$kWh and BESS efficiency $\eta=0.8$. The figure shows that $[\pi^*_{low}, \pi^*_{high}]$ is a relatively large interval when $c_p$ is comparable with $c_e$. When $c_p$ becomes large compared with $c_e$, $\pi^*_{low}$ and $\pi^*_{high}$ converges to a single point, as proved in Theorem \ref{thm:1}. Indeed,
$\pi^*_{low}$ and $\pi^*_{high}$ overlap when $c_p$ is larger than  $\$35$/kWh. In practice, the regulation failure penalty $c_p$ is usually much larger the regular electricity price $c_e$. Thus, we can safely regard the optimal target SoC as a single point in practical system designs.

Fig. \ref{fig:pi_vs_E} investigates the effect of battery energy capacity $E_{max}$ on the optimal target SoC $\pi_{low}^*$ and $\pi_{high}^*$. It can be seen that both $\pi_{low}^*$ and $\pi_{high}^*$ become low when $E_{max}$ is very large. This can be intuitively explained as follows. Recall that $s^e$ is to denote the BESS SoC at the end of an $I$ interval (or the beginning of a $J$ interval). If $s^eE_{max}$ and $(1-s^e)E_{max}$ are both larger than the maximum possible $E_{PFC}$, then regulation failures are completely avoided, and the operating cost would be dominated by the charging cost during $I$ intervals. When $E_{max}$ is large, there is a wide range of $s^e$ that can completely prevent regulation failures. Out of this range, smaller $s^e$'s are preferred, so that the charging cost during $I$ intervals is lower. This, the optimal target SoCs must be low when $E_{max}$ becomes large.

\begin{figure}
\centering
\includegraphics[width=0.5\textwidth]{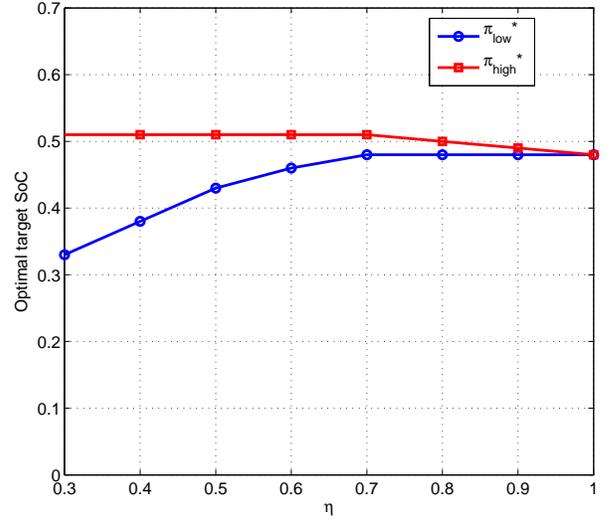}
\caption{$\pi^*_{low}$ and $\pi^*_{high}$ versus $\eta$ when $c_e=\$0.1/$kWh and $c_p=\$10/$kWh.}\label{fig:pi_vs_eta}
\end{figure}

\begin{figure}
\centering
\includegraphics[width=0.5\textwidth]{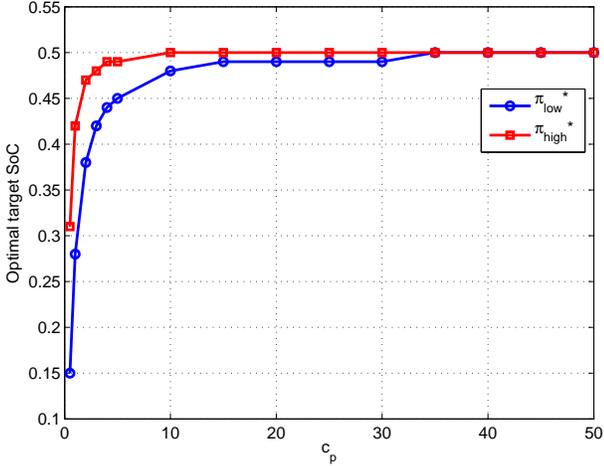}
\caption{$[\pi^*_{low}, \pi^*_{high}]$ vs. $c_p$ when $c_e=\$0.1/$kWh and $\eta$=0.8}\label{fig:pi_vs_cp}
\end{figure}

\begin{figure}
\centering
\includegraphics[width=0.5\textwidth]{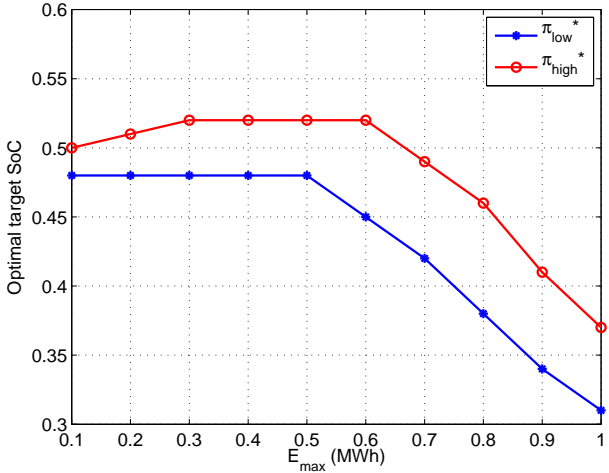}
\caption{$[\pi^*_{low}, \pi^*_{high}]$ vs. $E_{max}$ when $c_e=\$0.1/$kWh and $c_p=\$10/$kWh.}\label{fig:pi_vs_E}
\end{figure}

\subsection{Time Response Comparison}

To illustrate the advantage of the proposed BESS control scheme, we compare the operating cost of our scheme with the following three benchmark algorithms proposed in previous work, e.g., in \cite{oudalov2007optimizing}.
\begin{itemize}
\item No additional charging during $I$ intervals. Referred to ``No recharging" in the figures.
\item Recharge up to $100\%$ during $I$ intervals. Referred to as ``Aggressive recharging" in the figures.
\item Recharge with upper and lower target SoCs. This scheme is similar to our proposed scheme, except that the target SoCs are set heuristically (instead of optimized in our algorithm). In upper and lower target SoCs are set to be $0.92$ and $0.73$, respectively in \cite{oudalov2007optimizing}. This scheme is referred to as ``Heuristic recharging" in the figures.
\end{itemize}
In particular, we run a time-response simulation using the real-time frequency measurement data in Fig. \ref{fig:data}. The probability of encountering regulation failures is plotted in Fig. \ref{fig:outage}. Moreover, the time-aggregate operating costs (without discounting) are plotted in Figs. \ref{fig:timeres1} and \ref{fig:timeres2} when $E_{max}=0.1$MWh and  $E_{max}=1.5$MWh, respectively. It can be seen from Figs. \ref{fig:timeres1} and \ref{fig:timeres2}  that both "No recharging" and "Aggressive recharging" algorithms yield much higher cost than the optimal algorithm proposed in the paper. This is because the battery SoC is often too low (with "No recharging") or too high (with "Aggressive charging"), yielding much higher regulation failure probabilities, as shown in Fig. \ref{fig:outage}. On the other hand, with optimal target SoC, the proposed algorithm reduces both the operating cost and regulation failure probability compared with "Heuristic recharging".

\begin{figure}
\centering
\includegraphics[width=0.5\textwidth]{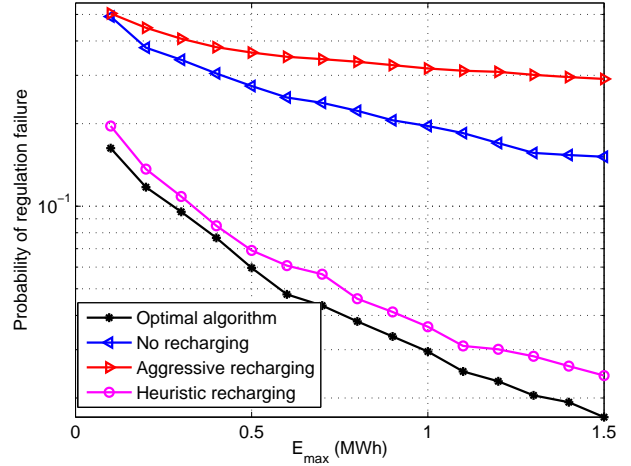}
\caption{Comparison of regulation failure probability when $c_e=\$0.1/$kWh, $c_p=\$10/$kWh and $\eta$=0.8}\label{fig:outage}
\end{figure}

\begin{figure}
\centering
\includegraphics[width=0.5\textwidth]{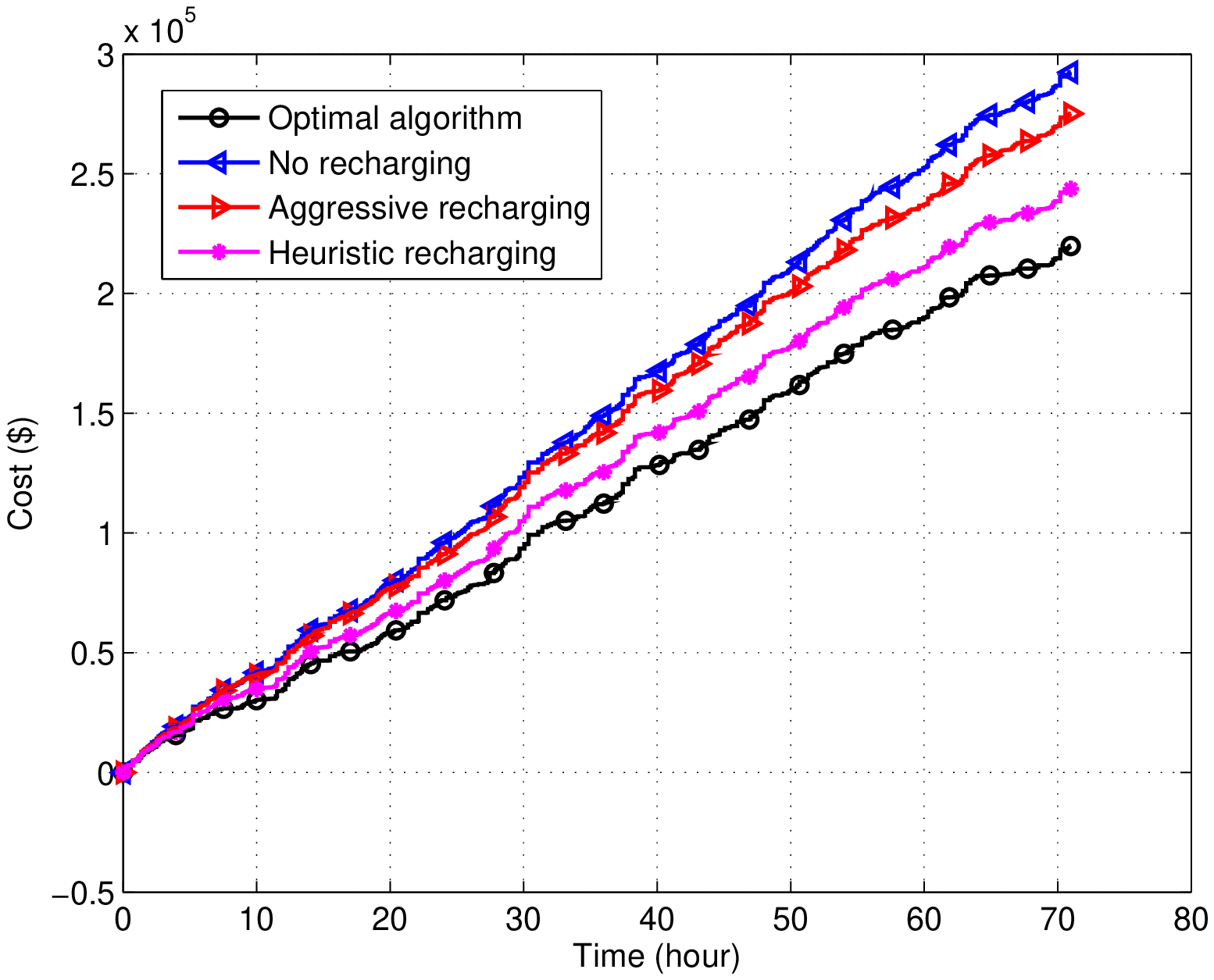}
\caption{Comparison of time-aggregate costs when $E_{max}=0.1$MWh, $c_e=\$0.1/$kWh, $c_p=\$10/$kWh and $\eta$=0.8}\label{fig:timeres1}
\end{figure}

\begin{figure}
\centering
\includegraphics[width=0.5\textwidth]{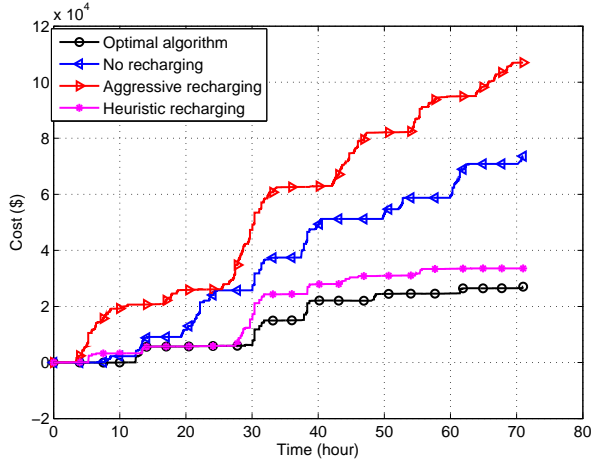}
\caption{Comparison of time-aggregate costs when $E_{max}=1.5$MWh, $c_e=\$0.1/$kWh, $c_p=\$10/$kWh and $\eta$=0.8}\label{fig:timeres2}
\end{figure}

\subsection{Optimal BESS Planning}

In Fig. \ref{fig:H_vs_E}, we verify Theorem \ref{thm:capacity} and investigate the effect of BESS energy capacity $E_{max}$ on the operating cost $H^*$. Here, $c_e=\$0.1/$kWh, $c_p=\$10/$kWh,  $\eta=0.8$, and $E_{max}$ varies from 0.05MWh to to 10MWh. It can be see that $H^*(s)$ is a decreasing convex function of $E_{max}$ for all initial SoC $s$. This implies that there exists an optimal BESS energy capacity $E_{max}$ that hits the optimal balance between the capital investment and operating cost.

\begin{figure}
\centering
\includegraphics[width=0.5\textwidth]{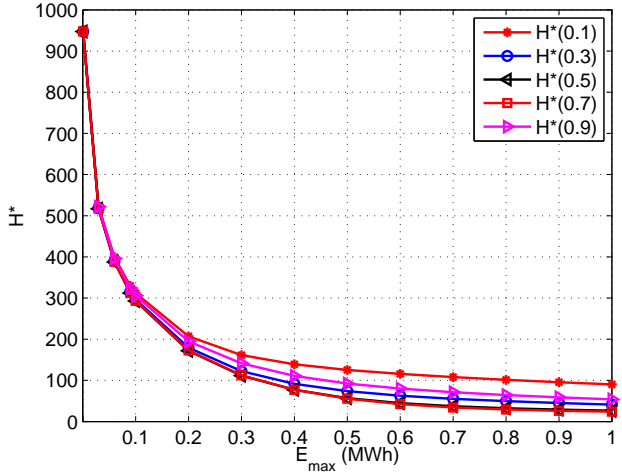}
\caption{$H^*(s)$ vs. $E_{max}$ when $c_e=\$0.1/$kWh, $c_p=\$10/$kWh, and $\eta=0.8$.}\label{fig:H_vs_E}
\end{figure}

\section{Conclusions}\label{sec:conclusions}
We studied the optimal planning and control for BESSs participating in the PFC regulation market. We show that the optimal BESS control is to charge or discharge the BESS during $I$ intervals until its SoC reaches a target value. We have proved that the optimal target SoC is a range that is invariant with respect to the BESS SoC $s$ at the beginning of the $I$ intervals. This implies that the optimal target SoC can be calculated offline and remain unchanged over the entire system time. Hence, the operation complexity can be kept very low. Moreover, the target SoC range reduces to a point in practical systems, where the penalty rate for regulation failure is much larger than the regular electricity price. It was also shown that the optimal operating cost is a decreasing convex function of the BESS energy capacity, implying the existence of an optimal energy capacity that balances the capital investment of BESS and the operating cost.

Other than PFC, BESSs can serve multiple purposes, such as demand response, energy arbitrage, and peak shaving. Different services require different energy and power capacities. For example, PFC reserves do not require high energy capacity, but are sensitive to regulation failures. On the other hand, high energy capacity is needed for demand response, energy arbitrage, and peak shaving. It is an interesting future research topic to study the optimal combining of these services in a single BESS.

\begin{appendices}
\section{Proof of Proposition \ref{pro:convexH}}\label{append:A}
\begin{proof}
First, calculate
\begin{eqnarray}
&&\frac{\partial H^*(s)}{\partial s} = \frac{\partial h(s, \pi^*)}{\partial s} \\
& +& \alpha p_1 \int_0^{Q_1^*} \int_0^{Q_2^*} \frac{\partial H^*(Q_3^*)}{\partial s} f_{E_{PFC}}(e)f_I(i) dedi \nonumber\\
&+&\alpha p_{-1}\int_0^{Q_1^*} \int_0^{Q_4^*}  \frac{\partial H^*(Q_5^*)}{\partial s}f_{E_{PFC}}(e)f_I(i) dedi \nonumber
\end{eqnarray}
where
\begin{equation}
Q_2^*=\frac{(1-s)E_{max}-\mathrm{sgn}(\pi^*-s)P_{max} i}{\eta },
\end{equation}
\begin{equation}
Q_3^*= \frac{sE_{max}+\mathrm{sgn}(\pi^*-s)P_{max} i+\eta e}{E_{max}},
\end{equation}
\begin{equation}
Q_4^*=\eta(sE_{max}+\mathrm{sgn}(\pi^*-s)P_{max} i),
\end{equation}
\begin{equation}
Q_5^*=\frac{sE_{max}+\mathrm{sgn}(\pi^*-s)P_{max} i- e/\eta}{E_{max}} ,
\end{equation}
and $Q_1^*$ is the same as $Q_1$ in \eqref{eqn:q1} except that $\pi$ is replaced by $\pi^*$ in the definition.
After some manipulations, we have
%\begin{eqnarray} \label{eqn:2nd}
%&&\frac{\partial^2 H^*(s)}{\partial s^2} =\frac{\partial^2 h(s, \pi^*)}{\partial s^2}\\
%&-& \frac{\mathrm{sgn}(\pi^*-s)\alpha p_1E_{max}}{P_{max}}f_I\left(Q_1^* \right) \nonumber\\
%&&\times\int_0^\frac{(1-\pi^*)E_{max}}{\eta } \frac{\partial H^*(s)}{\partial s}\bigg|_{s=\pi^*+\frac{\eta e}{E_{max}}} f_{E_{PFC}}(e)de \nonumber\\
%&+& \alpha p_1 \int_0^{Q_1^*} \int_0^{Q_2^*}\frac{\partial^2 H^*(s')}{\partial s'^2}\bigg|_{s'=Q_3^*} f_{E_{PFC}}(e)f_I(i) dedi \nonumber\\
%&-&\frac{\mathrm{sgn}(\pi^*-s)\alpha p_{-1}E_{max}}{P_{max}} f_I\left(Q_1^* \right)  \nonumber\\
%&&\times\int_0^{\eta \pi^* E_{max}}  \frac{\partial H^*(s)}{\partial s}\bigg|_{s=\pi^*-\frac{e}{\eta E_{max}}} f_{E_{PFC}}(e)de \nonumber\\
%&+&\alpha p_{-1}\int_0^{Q_1^*} \int_0^{Q_4^*}  \frac{\partial^2 H^*(s')}{\partial s'^2}\bigg|_{s'=Q_5^*} f_{E_{PFC}}(e)f_I(i) dedi \nonumber
%\end{eqnarray}
%Substituting
%\begin{eqnarray}
%\frac{\partial^2 h(s, \pi^*)}{\partial s^2}&=&\frac{-\mathrm{sgn}(\pi^*-s)E_{max}^2}{P_{max}}f_I\left(Q_1^*\right) r(s, \pi^*)\nonumber \\
%&&+\int_0^{Q_1^*}\frac{\partial^2 \overline{cost}_p\left(s'\right)}{\partial s'^2}\bigg|_{s'=s+\frac{P_{max} i}{E_{max}}}f_I(i)di \nonumber
%\end{eqnarray}
%into \eqref{eqn:2nd}, we have
\begin{eqnarray} \label{eqn:2nd_3}
&&H^{*(2)}(s) =a(s) \\
&+& \alpha p_1 \int_0^{Q_1^*} \int_0^{Q_2^*} H^{*(2)}\left(s'\right)\big|_{s'=Q_3^*} f_{E_{PFC}}(e)f_I(i) dedi \nonumber\\
&+&\alpha p_{-1}\int_0^{Q_1^* }\int_0^{Q_4^*}  H^{*(2)}\left(s'\right)\big|_{s'=Q_5^*} f_{E_{PFC}}(e)f_I(i) dedi, \nonumber
\end{eqnarray}
where $H^{*(2)}(s):= \frac{\partial^2 H^*(s)}{\partial s^2}$ and
\begin{eqnarray}\label{eqn:a}
a(s) &=&\frac{-\mathrm{sgn}(\pi^*-s)E_{max}}{P_{max}}\big(r(s,\pi^*)E_{max}+u(\pi^*)\big)f_I\left(Q_1^*\right) \nonumber\\
&&+\int_0^{Q_1^*}\frac{\partial^2 \overline{cost}_p\left(s'\right)}{\partial s'^2}\bigg|_{s'=s+\frac{P_{max} i}{E_{max}}}f_I(i)di.
\end{eqnarray}
We claim that $a(s)$ is non-negative for all $s$. To see this, note that
$$-\mathrm{sgn}(\pi^*-s)\big(r(s,\pi^*)E_{max}+u(\pi^*)\big)\geq 0$$
for all $s$ due to the necessary condition of optimal $\pi^*$ in \eqref{eqn:necessary} and \eqref{eqn:necessary2}. Thus, the first term of $a(s)$ is non-negative. Moreover, the integrand in the second term of $a(s)$ is always non-negative as:
\begin{equation}
\resizebox{\hsize}{!}{$
\begin{aligned}
&\frac{\partial^2 \overline{cost_p}(s)}{\partial s^2} \\
&=c_p E_{max}^2 \left(\frac{p_1}{\eta^2} f_{E_{PFC}}\left(E_{max}(1-s)\right) +p_{-1} \eta^2 f_{E_{PFC}}\left(E_{max}s \right) \right)\\
&\geq 0,
\end{aligned}$}
\end{equation}
where the equality is obtained by taking the second-order derivative of \eqref{eqn:15} over $s^e$ at $s^e = s$, and the inequality is due to the fact that PDF functions are non-negative. Thus, $a(s)\geq 0$.

Define two operators $D$ and $T$ such that
\begin{eqnarray}
Df(s) &=&  \alpha p_1 \int_0^{Q_1^*} \int_0^{Q_2^*} f\left(s'\right)\big|_{s'=Q_3^*} f_{E_{PFC}}(e)f_I(i) dedi \nonumber\\
&+&\alpha p_{-1}\int_0^{Q_1^* }\int_0^{Q_4^*}  f\left(s'\right)\big|_{s'=Q_5^*} f_{E_{PFC}}(e)f_I(i) dedi, \nonumber
\end{eqnarray}
and
\begin{equation}\label{eqn:T}
Tf(s)=a(s)+Df(s).
\end{equation}
It will be shown in Lemma \ref{lem:contraction} that the operator $T$ is a contraction mapping. Thus, $H^{*(2)}(s)$ is the fixed point of equation $f(s)=Tf(s),$
and the fixed point can be achieved by iteration
$$f^{(k+1)}(s) = Tf^{(k)}(s).$$
Letting $f^{(0)}(s)=0$ for all $s$, we can calculate the fixed point as
$$H^{*(2)}(s) = \sum_{i=0}^\infty K_i(s),$$
where $K_0(s)=a(s)$ and $K_i(s)=DK_{i-1}(s)$. Note that $D$ is a summation of two integrals, and therefore is non-negative when the integrand is non-negative. Thus, all $K_i(s)\geq 0$, because $K_0(s)=a(s)\geq 0$. As a result, $H^{*(2)}(s)\geq 0$ for all $s$. This completes the proof.
\end{proof}

\begin{lemma}\label{lem:contraction}
The operator $T$ defined in \eqref{eqn:T} is a contraction mapping.
\end{lemma}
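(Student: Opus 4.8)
The plan is to verify the contraction property directly in the supremum norm. I would regard $T$ as an operator on the Banach space $\mathcal{B}$ of bounded measurable functions on $[0,1]$ equipped with $\|f\|_\infty := \sup_{s\in[0,1]}|f(s)|$. Since the term $a(s)$ in \eqref{eqn:T} does not depend on the functional argument, for any $f,g\in\mathcal{B}$ we have $Tf(s)-Tg(s)=Df(s)-Dg(s)=D(f-g)(s)$, so the whole question reduces to bounding the linear operator $D$. Along the way I would also note that $T$ is a genuine self-map of $\mathcal{B}$: $a$ is bounded on $[0,1]$ (a finite combination of bounded terms — $r(s,\pi^*)$, $u(\pi^*)$, $f_I(Q_1^*)$, and the integral of $\partial^2\overline{cost_p}/\partial s^2$ against $f_I$ over a bounded interval), and the estimate below applied with $g\equiv 0$ gives $\|Df\|_\infty\le\alpha\|f\|_\infty$, so $Tf\in\mathcal{B}$; measurability is preserved because $D$ is an integral with measurable integrand.

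For the estimate, bound $|(f-g)(Q_3^*)|\le\|f-g\|_\infty$ and $|(f-g)(Q_5^*)|\le\|f-g\|_\infty$ and pull this constant out of the two integrals defining $D$. After doing so, what remains on the over-excursion branch is $\alpha p_1$ times the double integral of $f_{E_{PFC}}(e)f_I(i)$ over $\{0\le i\le Q_1^*,\ 0\le e\le Q_2^*\}$, and on the under-excursion branch $\alpha p_{-1}$ times the double integral over $\{0\le i\le Q_1^*,\ 0\le e\le Q_4^*\}$. Each inner integral $\int_0^{Q_2^*}f_{E_{PFC}}(e)\,de$ and $\int_0^{Q_4^*}f_{E_{PFC}}(e)\,de$ is at most $1$ (the integral of a density over a subset of its support, after checking $Q_2^*,Q_4^*\ge 0$ on $[0,Q_1^*]$), and $\int_0^{Q_1^*}f_I(i)\,di\le 1$; combined with $p_1+p_{-1}=1$ this yields $|D(f-g)(s)|\le\alpha\|f-g\|_\infty$ for every $s$, hence $\|Tf-Tg\|_\infty\le\alpha\|f-g\|_\infty$. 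Because $\alpha\in(0,1)$, $T$ is a contraction on $\mathcal{B}$ with modulus $\alpha$, which is exactly the property used in the proof of Proposition \ref{pro:convexH}.

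I expect the only non-mechanical point to be justifying that $f-g$ is evaluated only inside $[0,1]$, i.e.\ that $Q_3^*,Q_5^*\in[0,1]$ throughout the domains of integration; pulling out $\|f-g\|_\infty$ is otherwise illegitimate. This holds because, for $0\le i\le Q_1^*$, the quantity $sE_{max}+\mathrm{sgn}(\pi^*-s)P_{max}i$ stays in $[0,E_{max}]$ (it runs linearly from $sE_{max}$ to $\pi^*E_{max}$), and the upper limits $Q_2^*$ and $Q_4^*$ are chosen exactly so that adding $\eta e$ (resp.\ subtracting $e/\eta$) keeps the resulting SoC in $[0,1]$ — at $e=Q_2^*$ the numerator of $Q_3^*$ equals $E_{max}$, and at $e=Q_4^*$ the numerator of $Q_5^*$ equals $0$. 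Equivalently, $Q_1^*,Q_2^*,Q_4^*$ are the thresholds beyond which the clipping $[\cdot]_0^1$ in the transition $g$ activates and $\partial H^*/\partial s$ vanishes, which is why $D$ integrates only up to them; once this is recorded, the contraction bound is routine.
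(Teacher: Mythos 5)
Your proof is correct. The paper establishes the lemma by checking Blackwell's sufficient conditions (monotonicity and discounting), whereas you verify the contraction inequality directly in the supremum norm by exploiting the affine structure $Tf = a + Df$ with $D$ linear, so that $Tf - Tg = D(f-g)$ and it suffices to bound the operator norm of $D$. The decisive computation is the same in both arguments: the double integrals of $f_{E_{PFC}}(e)f_I(i)$ over the truncated domains are each at most $1$, and $p_1+p_{-1}=1$, which produces the modulus $\alpha<1$; this is exactly the paper's verification of the discounting condition $Db\le \alpha b$. Your route is slightly more elementary and self-contained --- it avoids invoking Blackwell's theorem, which is really only indispensable when the operator is not affine (e.g., a Bellman operator involving a minimum over policies) --- and it makes explicit a point the paper leaves implicit, namely that the evaluation points $Q_3^*$ and $Q_5^*$ remain in $[0,1]$ throughout the domains of integration, so that pulling out $\|f-g\|_\infty$ over $[0,1]$ is legitimate. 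Both arguments yield the same contraction modulus $\alpha$, so nothing downstream changes.
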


To prove the lemma, we can show that $T$ satisfies following Blackwell Sufficient Conditions for contraction mapping.
\begin{itemize}
\item (Monotonicity) For any pairs of functions $f(s)$ and $g(s)$ such that $f(s)\leq g(s)$ for all $s$, $Tf(s)\leq Tg(s)$.
\item (Discounting) $\exists \beta \in(0, 1): T(f+b)(s)<Tf(s)+\beta b~ \forall f, b\geq0, s$.
\end{itemize}
\begin{proof}
Obviously, $Df(s)\leq Dg(s)$ for any pairs of functions $f(s)\leq g(s)$, because the operators is a summation of two integrals with non-negative integrands. Thus, $Tf(s)\leq Tg(s)$, and the Monotonicity condition holds.

To prove the discounting property, notice that
\begin{eqnarray}
T(f+b)(s)&=&a(s)+D(f+b)(s)=a(s)+Df(s)+Db \nonumber\\
&=&Tf(s)+Db,
\end{eqnarray}
because integrals are linear operations. Moreover,
\begin{eqnarray}
Db& =&  \alpha b \left( p_1 \int_0^{Q_1^*} \int_0^{Q_2^*}  f_{E_{PFC}}(e)f_I(i) dedi \right.\nonumber\\
&&\left.+p_{-1}\int_0^{Q_1^* }\int_0^{Q_4^*}   f_{E_{PFC}}(e)f_I(i) dedi\right)\nonumber\\
&\leq& \alpha b \left( p_1 \int_0^{Q_1^*}f_I(i) di +p_{-1}\int_0^{Q_1^*} f_I(i) di\right)\nonumber\\
&\leq& \alpha b(p_1+p_{-1}) \nonumber\\
&=&\alpha b.
\end{eqnarray}
Here, the inequalities are due to the fact that the integrals of PDF functions are no larger than 1. Since $\alpha$ is a discounting factor that is smaller than 1, the Discounting condition holds.
\end{proof}

\section{Proof of Lemma \ref{lem:4}}\label{append:B}
\begin{proof}
\begin{eqnarray}
&&\frac{\partial u(\pi)}{\partial \pi} \nonumber\\
&=&\alpha p_1\int_0^\frac{(1-\pi)E_{max}}{\eta }  H^{*(2)}\left(s\right)\big|_{s=\pi+\frac{\eta e }{E_{max}}} f_{E_{PFC}}(e) de \nonumber\\
&&+\alpha p_{-1}  \int_0^{\eta \pi E_{max}}   H^{*(2)}\left( s\right)\big|_{s=\pi-\frac{e}{\eta E_{max}}} f_{E_{PFC}}(e)de \nonumber\\
&\geq& 0,
\end{eqnarray}
where the equality is obtained by differentiating \eqref{eqn:22} over $\pi$, and the inequality is due to the fact that $H^{*(2)}\left(s\right) \geq 0$ for all $s$, as proved in Proposition \ref{pro:convexH}. Thus, $u(\pi)$ increases with $\pi$. Meanwhile, both $r_1(\pi)$ and $r_2(\pi)$  are increasing functions of $\pi$, because $\tilde{F}_{E_{PFC}}(x)$ is a decreasing function of $x$. Hence, both $r_1(\pi)E_{max}+u(\pi)$ and $r_2(\pi)E_{max}+u(\pi)$ are increasing functions of $\pi$. Moreover, when $\pi$ increases from $s^-$ to $s^+$, $r(s,\pi)E_{max}+u(\pi)$ increases by $\left(\frac{1}{\eta}-\eta\right)c_e$ from $r_2(s^-)E_{max}+u(s^-)$ to $r_1(s^+)E_{max}+u(s^+)$. This completes the proof.
\end{proof}

\section{Proof of Theorem \ref{thm:capacity}}\label{append:C}
\begin{proof}
The proof of convexity of $H^*(s)$ with respect to $E_{max}$ is similar to that for Proposition \ref{pro:convexH}, and thus is shortened here. We first calculate
\begin{equation}\label{eqn:second_E}
\resizebox{\hsize}{!}{$
\begin{aligned}
&\frac{\partial^2 H^*(s)}{\partial E_{max}^2} \\
=& \tilde{a}(s, E_{max}) \\
&+ \alpha p_1 \int_0^{Q_1^*} \int_0^{Q_2^*}  \frac{\partial^2 H(s')}{\partial E_{max}^2}\bigg|_{s' = Q_3^*}  f_{E_{PFC}}(e) f_I(i) de di \\
&+ \alpha p_1 \tilde{F}_I(Q_1^*) \int_0^{\frac{(1-\pi^*)E_{max}}{\eta}}  \frac{\partial^2 H(s')}{\partial E_{max}^2}\bigg|_{s' = \frac{\pi^*E_{max}+\eta e}{E_{max}}} f_{E_{PFC}}(e) de \\
&+ \alpha p_{-1} \int_0^{Q_1^*} \int_0^{Q_4^*}  \frac{\partial^2 H(s')}{\partial E_{max}^2}\bigg|_{s' =Q_5^*} f_{E_{PFC}}(e) f_I(i) de di \\
&+ \alpha p_{-1} \tilde{F}_I(Q_1^*) \int_0^{\eta\pi^*E_{max}}  \frac{\partial^2 H(s')}{\partial E_{max}^2}\bigg|_{s' = \pi^*-\frac{e}{\eta E_{max}}} f_{E_{PFC}}(e) de, \\
\end{aligned}$}
\end{equation}
where

\begin{equation}
\resizebox{\hsize}{!}{$
\begin{aligned}
&\tilde{a}(s, E_{max}) \\
=& \frac{-\mathrm{sgn}(\pi^*-s)|\pi^*-s|^2}{P_{max} E_{max}} f_I (Q_1^*)  \left(r(s, \pi^*)E_{max} + u(\pi^*)\right)\\
&+c_p p_1 \left (\frac{1-\pi^*}{\eta} \right)^2 f_{E_{PFC}}\left(\frac{(1-\pi^*)E_{max}}{\eta}\right)\tilde{F}_I (Q_1^*)\\
& + c_p p_{-1} (\eta \pi^*)^2 f_{E_{PFC}} \left(\eta\pi^*E_{max}\right)  \tilde{F}_I (Q_1^*)  \\
&+ \int_0^{Q_1^*} c_p \left( p_1 \left (\frac{1-s}{\eta}\right)^2 f_{E_{PFC}}(Q_2^*) + p_{-1} (\eta s)^2 f_{E_{PFC}} (Q_4^*) \right) f_I(i)di.
\end{aligned}$}
\end{equation}
We claim that $\tilde{a}(s, E_{max})\geq 0$ for all $s$ and $E_{max}$. To see this, note that the first term is always non-negative, because
$$-\mathrm{sgn}(\pi^*-s)\left(r(s, \pi^*)E_{max} + u(\pi^*)\right)\geq 0$$
due to \eqref{eqn:necessary} and \eqref{eqn:necessary2}. Moreover, the remaining terms are non-negative due to the non-negativeness of PDFs and CCDFs.

Same as the proof in Proposition \ref{pro:convexH}, we can show that the right hand side of \eqref{eqn:second_E} is a contraction mapping. Thus, we can calculate $\frac{\partial^2 H^*(s)}{\partial E_{max}^2}$ as a fixed point and get
$$\frac{\partial^2 H^*(s)}{\partial E_{max}^2}=\sum_{i=0}^\infty \tilde{K}_i(s, E_{max}),$$
where all $\tilde{K}_i(s, E_{max})\geq 0$. This implies that $\frac{\partial^2 H^*(s)}{\partial E_{max}^2}\geq0$, and thus $H^*(s)$ is convex with respect to $E_{max}$.

Now we proceed to prove that $H^*(s)$ is a decreasing function of $E_{max}$. We first show that the optimal single-stage cost $h^*(s)=\min_{\pi}h(s, \pi)$ decreases with $E_{max}$. Then, the decreasing monotonicity of  $H^*(s)$ with respect to $E_{max}$ can be proved by the monotonicity property of contraction mapping, which is stated in Lemma \ref{lem:contraction}.

Recall that $\frac{\partial h(s, \pi)}{\partial \pi}=r(s, \pi)E_{max}\tilde{F}_I(Q_1)$,  where $r(s, \pi)$ is defined in \eqref{eqn:r}.  Thus, the optimal $\pi$ that minimizes $h(s, \pi)$ satisfies
\begin{equation} \label{eqn:r=0}
r(s, \pi)=0.
\end{equation}
Furthermore, we can calculate that
\begin{eqnarray} \label{eqn:h_wrt_E}
&&\frac{\partial h(s, \pi)}{\partial E_{max}} = \left(\mathbf{1}_{\pi\geq s}\frac{1}{\eta} - \mathbf{1}_{\pi<s} \eta \right) c_e |\pi-s| \tilde{F}_I\left( Q_1\right)\nonumber \\
&-& c_p\frac{p_1(1-\pi)}{\eta}\tilde{F}_{E_{PFC}}\left(\frac{E_{max}(1-\pi)}{\eta} \right)\tilde{F}_I\left( Q_1\right) \nonumber\\
& -& c_p p_{-1}\eta\pi \tilde{F}_{E_{PFC}}\left(\eta E_{max}\pi \right) \tilde{F}_I\left( Q_1\right)
\end{eqnarray}
Substituting \eqref{eqn:r=0} to \eqref{eqn:h_wrt_E}, we have
\begin{eqnarray} \label{eqn:h_wrt_E_2}
&&\frac{\partial h^*(s)}{\partial E_{max}} = - \left( \mathbf{1}_{\pi_{1st}\geq s}\frac{1}{\eta} + \mathbf{1}_{\pi_{1st}<s} \eta \right) c_e s \tilde{F}_I\left( Q_1\right)\nonumber\\
& -&\frac{c_pp_1}{\eta} \tilde{F}_{E_{PFC}}\left(\frac{E_{max}(1-\pi_{1st})}{\eta} \right)\tilde{F}_I\left( Q_1\right) \nonumber\\
&\leq& 0,
\end{eqnarray}
where $\pi_{1st}$ is the minimizer of $h(s, \pi)$. \eqref{eqn:h_wrt_E_2} implies that $h^*(s)$ decreases with $E_{max}$ for all $s$.

Next, note that the Bellman equation of infinite-horizon dynamic programming is a contraction mapping \cite{bertsekas1995dynamic}. Let
\begin{equation}
TH(s)=\min_{\pi\in[0,1]} h(s, \pi) + \alpha \mathrm{E}_{I, E_{PFC}, q}\left[ H(g(s, \pi, I, E_{PFC}, q))\right]
\end{equation}
be the contraction operator  corresponding to the Bellman equation in \eqref{eqn:bellman}.  Then,
$$H^*(s) = \lim_{k\rightarrow\infty}(T^k H_0)(s)$$
for all $s$.

Starting with $H_0(s)=0$, we have
$$H_1(s)=TH_0(s)=h^*(s).$$
Let $h^{*+}(s)$ (or $H_k^+(s)$) and $h^{*-}(s)$ (or $H_k^-(s)$) denote  $h^*(s)$ (or $H_k(s)$) with BESS energy capacity $E^+_{max}$ and $E^-_{max}$, respectively. We have proved that $h^{*+}(s)\leq h^{*-}(s)$, or equivalently  $H_1^+(s)\leq H^-_1(s)$, if $E^+_{max}\geq E^-_{max}$.  Due to the monotonicity property of contraction mapping,
$$H_k^+(s)\leq H_k^-(s)$$
as long as $H_{k-1}^+(s)\leq H_{k-1}^-(s)$ for all $k$. Taking $k$ to infinity, we have $H^{*+}(s)\leq H^{*-}(s)$ when $E^+_{max}\geq E^-_{max}$. This completes the proof.

\end{proof}

\section{Algorithm to Obtain  $\pi^*_{low}$ and $\pi^*_{high}$}\label{append:algorithm}
The traditional algorithms to solve infinite-horizon dynamic programming problems, e.g., value iteration and policy iteration algorithms, involve iterative steps, where in each iteration, the policy $\pi(i)$ is updated for each system state (i.e., BESS SoC) $i$. In our problem, the state space is continuous in $[0, 1]$. If it is discretized into $N$ levels, i.e., $i\in\{0, \delta, 2\delta, \cdots, 1\}$ where $\delta=\frac{1}{N-1}$, then $N$ optimization problems, one for each $\pi(i)$, need to be solved in each iteration.

Based on the state-invariant property of $\pi^*_{low}$ and $\pi^*_{high}$, the complexity of solving the dynamic programming problem can be greatly reduced. Define $p_{ij}(\pi)=\Pr\{s_{n+1}=j|s_n=i, \pi\}$, which can be calculated from the distributions of $I$, $J$, $q$, and $E_{PFC}$. For any given pair of $\mathbf{d}=(\pi_{low}, \pi_{high})$, we have
\begin{equation}
p_{ij}^\mathbf{d}\doteq p_{ij}(\pi(i))=
\begin{cases}
p_{ij}(\pi_{low}) & i<\pi_{low} \\
p_{ij}(\pi_{high}) & i>\pi_{low} \\
p_{ij}(i) & \pi_{low}\leq i \leq\pi_{high}
\end{cases}
\end{equation}
Let $\mathbf{P}^\mathbf{d}$ be the matrix of $p_{ij}^\mathbf{d}$, and $\mathbf{H}^{\mathbf{d}}$  be the vector of $H^\mathbf{d}(i)$. Likewise, define vector $\mathbf{h}^{\mathbf{d}}$, whose $i^{th}$ entry is $h(i, \pi_{low})$ when $i<\pi_{low}$, $h(i, \pi_{high})$ when $i>\pi_{high}$, and $h(i, i)$ when $\pi_{low}\leq i\leq \pi_{high}$. Then, $\mathbf{H}^{\mathbf{d}}$ can be obtained as the solution of
\begin{equation}
\left(\mathbf{I}-\alpha \mathbf{P}^\mathbf{d}\right)\mathbf{H}^{\mathbf{d}} = \mathbf{h}^\mathbf{d}.
\end{equation}
The optimal $\pi^*_{low}$ and $\pi^*_{high}$ can then be obtained by solving
\begin{equation}\label{eqn:app_opt_pi}
\min_{\pi_{low}, \pi_{high}} \beta^T\left(\mathbf{I}-\alpha \mathbf{P}^\mathbf{d}\right)^{-1}\mathbf{h}^\mathbf{d},
\end{equation}
where $\mathbf{\beta}$ is an arbitrary vector\footnote{The fact that $\pi^*_{low}$ and $\pi^*_{high}$ minimize $H(i)$ for all $i$ implies that the optimal solution to \eqref{eqn:app_opt_pi} is the same for all $\beta$.}. In contrast to the traditional value iteration and policy iteration approaches, no iteration is required here. $\pi^*_{low}$ and $\pi^*_{high}$ can be obtained by solving one optimization problem \eqref{eqn:app_opt_pi} with two scalar variables only.

\end{appendices}

\bibliographystyle{IEEEtran}
% argument is your BibTeX string definitions and bibliography database(s)
\bibliography{ref-freq-control}

% Generated by IEEEtran.bst, version: 1.13 (2008/09/30)
\begin{thebibliography}{10}
\providecommand{\url}[1]{#1}
\csname url@samestyle\endcsname
\providecommand{\newblock}{\relax}
\providecommand{\bibinfo}[2]{#2}
\providecommand{\BIBentrySTDinterwordspacing}{\spaceskip=0pt\relax}
\providecommand{\BIBentryALTinterwordstretchfactor}{4}
\providecommand{\BIBentryALTinterwordspacing}{\spaceskip=\fontdimen2\font plus
\BIBentryALTinterwordstretchfactor\fontdimen3\font minus
  \fontdimen4\font\relax}
\providecommand{\BIBforeignlanguage}[2]{{%
\expandafter\ifx\csname l@#1\endcsname\relax
\typeout{** WARNING: IEEEtran.bst: No hyphenation pattern has been}%
\typeout{** loaded for the language `#1'. Using the pattern for}%
\typeout{** the default language instead.}%
\else
\language=\csname l@#1\endcsname
\fi
#2}}
\providecommand{\BIBdecl}{\relax}
\BIBdecl

\bibitem{kundur1994power}
P.~Kundur, \emph{Power System Stability and Control}.\hskip 1em plus 0.5em
  minus 0.4em\relax McGraw-Hill, 1994.

\bibitem{schweppe1980homeostatic}
F.~C. Schweppe, R.~D. Tabors, J.~L. Kirtley, H.~R. Outhred, F.~H. Pickel, and
  A.~J. Cox, ``Homeostatic utility control,'' \emph{IEEE Trans. Power Apparatus
  and Systems}, vol. PAS-99, no.~3, pp. 1151--1163, 1980.

\bibitem{zhao2014design}
C.~Zhao, U.~Topcu, N.~Li, and S.~Low, ``Design and stability of load-side
  primary frequency control in power systems,'' \emph{IEEE Trans. Automatic
  Control}, vol.~59, no.~5, pp. 1177--1189, 2014.

\bibitem{zhao2013optimal}
C.~Zhao, U.~Topcu, and S.~H. Low, ``Optimal load control via frequency
  measurement and neighborhood area communication,'' \emph{IEEE Trans. Power
  Systems}, vol.~28, no.~4, pp. 3576--3587, 2013.

\bibitem{short2007stabilization}
J.~Short, D.~G. Infield, L.~L. Freris \emph{et~al.}, ``Stabilization of grid
  frequency through dynamic demand control,'' \emph{IEEE Trans on Power
  Systems}, vol.~22, no.~3, pp. 1284--1293, 2007.

\bibitem{molina2011decentralized}
A.~Molina-Garcia, F.~Bouffard, and D.~S. Kirschen, ``Decentralized demand-side
  contribution to primary frequency control,'' \emph{IEEE Trans. Power
  Systems}, vol.~26, no.~1, pp. 411--419, 2011.

\bibitem{leecoordinated}
S.-J. Lee, J.-H. Kim, C.-H. Kim, S.-K. Kim, E.-S. Kim, D.-U. Kim, K.~K.
  Mehmood, and S.~U. Khan, ``Coordinated control algorithm for distributed
  battery energy storage systems for mitigating voltage and frequency
  deviations,'' \emph{To appear, IEEE Trans. Smart Grid}.

\bibitem{xu2014bess}
B.~Xu, A.~Oudalov, J.~Poland, A.~Ulbig, and G.~Andersson, ``{BESS} control
  strategies for participating in grid frequency regulation,'' in \emph{World
  Congress}, vol.~19, no.~1, 2014, pp. 4024--4029.

\bibitem{oudalov2006value}
A.~Oudalov, D.~Chartouni, C.~Ohler, and G.~Linhofer, ``Value analysis of
  battery energy storage applications in power systems,'' in \emph{IEEE PES
  Power Systems Conference and Exposition}, 2006, pp. 2206--2211.

\bibitem{denholm2010role}
P.~Denholm, E.~Ela, B.~Kirby, and M.~Milligan, ``The role of energy storage
  with renewable electricity generation,'' \emph{Technical Report, National
  Renewable Energy Laborabory}, 2010.

\bibitem{kottick1993battery}
D.~Kottick, M.~Blau, and D.~Edelstein, ``Battery energy storage for frequency
  regulation in an island power system,'' \emph{IEEE Trans. Energy Conversion},
  vol.~8, no.~3, pp. 455--459, 1993.

\bibitem{borsche2013power}
T.~Borsche, A.~Ulbig, M.~Koller, and G.~Andersson, ``Power and energy capacity
  requirements of storages providing frequency control reserves,'' in
  \emph{IEEE Power and Energy Society General Meeting (PES)}, 2013, pp. 1--5.

\bibitem{mercier2009optimizing}
P.~Mercier, R.~Cherkaoui, and A.~Oudalov, ``Optimizing a battery energy storage
  system for frequency control application in an isolated power system,''
  \emph{IEEE Trans. on Power Systems}, vol.~24, no.~3, pp. 1469--1477, 2009.

\bibitem{aghamohammadi2014new}
M.~R. Aghamohammadi and H.~Abdolahinia, ``A new approach for optimal sizing of
  battery energy storage system for primary frequency control of islanded
  microgrid,'' \emph{Int. J. Electrical Power and Energy Systems}, vol.~54, pp.
  325--333, 2014.

\bibitem{spees2007demand}
K.~Spees and L.~B. Lave, ``Demand response and electricity market efficiency,''
  \emph{Electricity J.}, vol.~20, no.~3, pp. 69--85, 2007.

\bibitem{FERC}
S.~Isser, ``{FERC} order 719 and demand response in {ISO} markets,'' \emph{Good
  Company Associates/{ERCOT} Technical Report}, 2009.

\bibitem{bertsekas1995dynamic}
D.~P. Bertsekas, \emph{Dynamic programming and optimal control}.\hskip 1em plus
  0.5em minus 0.4em\relax Athena Scientific Belmont, MA, 1995, vol. 1 and 2,
  no.~2.

\bibitem{7265090}
Y.~Liu, L.~Zhan, Y.~Zhang, P.~Markham, D.~Zhou, J.~Guo, Y.~Lei, G.~Kou, W.~Yao,
  J.~Chai, and Y.~Liu, ``Wide-area measurement system development at the
  distribution level: an fnet/grideye example,'' \emph{IEEE Transactions on
  Power Delivery}, vol.~PP, no.~99, pp. 1--1, 2015.

\bibitem{oudalov2007optimizing}
A.~Oudalov, D.~Chartouni, and C.~Ohler, ``Optimizing a battery energy storage
  system for primary frequency control,'' \emph{IEEE Trans. Power Systems},
  vol.~22, no.~3, pp. 1259--1266, 2007.

\end{thebibliography}

\end{document}